\documentclass[preprint,12pt]{elsarticle}
\nopreprintlinetrue

\usepackage{graphicx}
\usepackage{natbib}
\usepackage{epstopdf}
\usepackage{floatrow}
\usepackage{multirow}
\usepackage{caption}
\usepackage{subcaption}

\usepackage{comment}
\biboptions{sort&compress}

\usepackage{lineno}

\usepackage[applemac]{inputenc}
\usepackage[T1]{fontenc}
\usepackage[ngerman,english]{babel}

\usepackage{xcolor}
\usepackage{enumitem}

\usepackage{amsmath}
\usepackage{amsthm}
\usepackage{amssymb}
\usepackage{mathtools}
\usepackage[colorlinks,allcolors=blue]{hyperref}

\usepackage{hyperref}
\newtheorem{theorem}{Theorem}[section]

\newtheorem{lemma}[theorem]{Lemma}
\newtheorem{assumption}{Assumption}[section]
\newtheorem{remark}{Remark}[section]
\newtheorem{fact}[theorem]{Fact}

\newcommand{\Z}{\mathbb{Z}}
\renewcommand{\u}{\mathbf{u}}

\journal{Physica D: Nonlinear Phenomena}

\begin{document}

\begin{frontmatter}


\title{Long-time behavior of optimal mixing in an advection-diffusion shell model}

\author[label0]{Jiajia Guo\corref{cor1}} 
\ead{jiajiag@andrew.cmu.edu}
\author[label1]{Baole Wen\corref{cor1}}
\ead{bwen@nyit.edu}
\author[label2]{Christian Seis} 
\ead{seis@uni-muenster.de}
\author[label3,label4,label5]{Charles R. Doering}
\affiliation[label0]{organization={Department of Mathematical Sciences, Carnegie Mellon University},
            city={Pittsburgh},
            state={PA},
            postcode={15213},
            country={USA}}      
\affiliation[label1]{organization={Department of Mathematics, New York Institute of Technology}, 
            city={Old Westbury},
            state={NY},
            postcode={11568},
            country={USA}}
\affiliation[label2]{organization={Institut f\"ur Analysis und Numerik, Universit\"at M\"unster},
            postcode={48149 M{\"u}nster},
            country={Germany}}    
\affiliation[label3]{organization={Department of Mathematics, University of Michigan},
            city={Ann Arbor},
            state={MI},
            postcode={48109-1043}, 
            country={USA}}   
\affiliation[label4]{organization={Department of Physics, University of Michigan},
            city={Ann Arbor},
            state={MI},
             postcode={48109-1040}, 
            country={USA}}   
\affiliation[label5]{organization={Center for the Study of Complex Systems, University of Michigan},
            city={Ann Arbor},
            state={MI},
            postcode={48109-1107}, 
            country={USA}}

\cortext[cor1]{Corresponding authors.}

\begin{abstract}
We investigate the long-time behavior of optimal mixing in an advection-diffusion equation using a shell model framework.  Our focus is on quantifying the decay of the scalar variance, measured by the
negative Sobolev norm $H^{-1}$, under enstrophy-constrained stirring.  We perform long-time computations using both local-in-time (maximizing the instantaneous mixing rate) and global-in-time (maximizing mixedness at a prescribed final time) optimization strategies.  For mixing with diffusion ($\kappa>0$), the numerical results show that the scalar length scale eventually becomes limited by a generalized Batchelor scale, in close agreement with theoretical predictions.  In this regime, the $H^{-1}$ mix-norm decays exponentially in time with a decay rate that is independent of the diffusivity $\kappa$.  Compared with the purely advective case ($\kappa = 0$), diffusion significantly enhances the long-time mixing rate; moreover, increasing diffusivity further improves mixing efficiency by reducing the prefactor of the exponential decay.  Guided by these numerical observations, we derive new conditional lower bounds on the $H^{-1}$ norm whose exponential decay rates are strictly independent of the diffusivity parameter $\kappa$, for all $\kappa > 0$.  We further establish conditional upper bounds on the maximal rate of enhanced dissipation of the scalar variance, showing that the effective diffusion time scale is at least of the order $|\log\kappa|$.
\end{abstract}

\begin{keyword}
Mixing, advection-diffusion equation, flow control and optimization, Batchelor scale
\end{keyword}
\end{frontmatter}


\section{Introduction}\label{sec:intro}

Fluid mixing is a ubiquitous phenomenon in nature as well as in industrial and engineering processes, occurring across a wide range of scales, from large-scale atmospheric and oceanic circulation, to medium-scale combustion, to microscale mixing in food and pharmaceutical processing or microfluidic production.  For representative studies, see~Refs.~\citep{Wunsch2004, Annaswamy1995, Hessel2005, Cullen2009}.  A central question in both fundamental research and practical applications concerns the efficiency of mixing, namely, how effectively a flow mixes an observable quantity within a given time or under a fixed resource budget.

Despite the complexity of many systems under consideration, the mathematical model describing fluid mixing is surprisingly simple: it is governed by a linear partial differential equation (PDE), known as the advection-diffusion equation:
\begin{eqnarray}\label{eq:AD}
&    \dfrac{\partial \theta}{\partial t} + \mathbf{u}\cdot \nabla\theta
     = \kappa\Delta\theta.
\end{eqnarray}
Here, $\mathbf{u}(\mathbf{x},t)$ represents a given divergence-free velocity field, $\theta(\mathbf{x},t)$ is the scalar observable, e.g., concentration, mass density, order parameter, or temperature, and $\kappa$ denotes the diffusivity (or thermal diffusivity) of the substance.  Depending on the specific problem, the velocity field $\mathbf{u}$ is often determined by solving a nonlinear PDE, such as the Euler or Navier--Stokes equations.  In some cases, $\mathbf{u}$ is even coupled to the observed quantity $\theta$, for instance, through buoyancy forces as seen in Rayleigh--B\'enard
convection~\citep{Wen2020JFM, Wen2022JFM}.

In this paper, we decouple $\mathbf{u}$ from any such constraints, allowing us to investigate \emph{maximal mixing efficiencies} for the model equation~\eqref{eq:AD} under appropriate integral constraints on the fluid velocity $\mathbf{u}$.  Common choices for these constraints include limiting the kinetic energy via $\|\mathbf{u}\|_{L^2}$ or the enstrophy (viscous dissipation) via $\|\nabla \textbf{u}\|_{L^2}$, see Ref.~\citep{Lin2011}. In the present work, we will solely focus on the finite enstrophy setting.

The fundamental questions that arise in the investigation of maximal mixing efficiencies are twofold: 
\begin{enumerate}[label=(\roman*)]
\item \emph{What is the maximal decay rate of mix-norms under the evolution \eqref{eq:AD} that can be achieved by the constrained velocity fields?}
\item \emph{How can we design an optimally mixing velocity field that saturates the maximal analytical bound?}
\end{enumerate}

In the non-diffusive setting, where $\kappa = 0$, these questions are now quite well understood---at least from an academic standpoint.  In this context, lower bounds, featuring at best exponential mixing, have been established for mix-norms such as the $H^{-1}$ norm \citep{Lunasin2012, Seis2013, Iyer2014}, optimal transportation distances \citep{Seis2013}, and geometric mixing measures \citep{Crippa2008}, {establishing exponential decay as a fundamental barrier that no incompressible flow can exceed.}

In the diffusive setting, where $\kappa > 0$, our rigorous understanding of maximal mixing remains fairly limited.  Since the work of Batchelor \citep{Batchelor1959}, it has been recognized that diffusion limits the effect of a stirring flow beyond a certain length scale, at which diffusive smoothing and filamentation balance each other.  Scales finer than this \emph{Batchelor scale} cannot persist, and at that
point, mixing becomes a matter of reducing intensity differences rather than further reducing scale.  Consequently, it is customary to analyze mixing in the diffusive regime by measuring both the variance $\|\theta\|_{L^2}$ (for mean-zero observables) and a mix-norm like $\|\theta\|_{H^{-1}}$~\citep{Thiffeault2012}.  Both quantities can be estimated to decay at most double exponentially for Lipschitz velocity fields, though exponential bounds are generically expected and supported by numerical evidence \citep{Miles2018}.  Recent work has established conditional bounds on the decay of variance \citep{Seis23, Brue2021, MeyerSeis24}, which are similar in spirit to our new Theorems~\ref{T1} and \ref{T2} below.  These results suggest the validity of exponential mixing rates, but do not rule out the possibility of faster mixing scenarios.  Investigating the decay of the $L^2$ norm is also of interest when comparing it with the classical decay for the heat equation.  In the absence of stirring, i.e., for $\mathbf{u} \equiv 0$, the diffusion time, defined as the time required to reduce the initial variance by a factor of $2$, is of order $1/\kappa$.  In contrast, stirring generates sharp concentration gradients through filamentation, thereby accelerating the dissipation of variance.  Our conditional estimates in section~\ref{sec:UpperBound} show that this enhanced mixing reduces the diffusion time to order $|\log \kappa|$.

In the absence of absolute lower bounds, identifying optimal mixing strategies and thus addressing the second question remains challenging.  On the positive side, exponential mixing rates have been achieved in the diffusive setting for stochastic fluid systems; see, for example, Refs.~\citep{Bedrossian2021, CoopermanIyerSon25, NavarroSeis26}.  These rates are in excellent agreement with the observations in Ref.~\cite{Seis23}, but the question of whether they represent optimal mixing remains unresolved.

One approach to addressing the second question independently of the first is to study the associated optimization problems \citep{Mathew2005, Mathew2007, Lin2011, Miles2018, Miles2018diffusion, Zhu2024}.  These problems can be considered either instantaneously (i.e., locally or pointwise in time) or globally over a given time interval.  While the local-in-time mixing strategy is relatively straightforward, it often leads to sub-optimal bounds.  In contrast, the global-in-time strategy, which aims to maximize mixing at a prescribed final time, presents significant theoretical and numerical
challenges---particularly when the final time is large.

To circumvent the mathematical and numerical difficulties inherent in the full PDE, it is customary to reduce Eq.~\eqref{eq:AD} to a shell model.  Shell models have been extensively used to study energy-cascade mechanisms, energy dissipation, anomalous scaling, and other turbulence phenomena in the context of the Navier--Stokes equations (NSE) \citep{Lorenz1963, Gledzer1973, Yamada1988, Jensen1992, Biferale2003, Constantin2006, Ditlevsen2010}.  While the Fourier transform of the NSE produces an infinite set of coupled ordinary differential equations (ODEs), shell models describe the statistics of turbulence in Fourier space using a simple set of ODEs by ``binning'' the variables with wavenumbers $\sigma^{n-1}k_0 < |\mathbf{k}| < \sigma^n k_0$, where $\sigma>1$ is the intershell ratio and $k_0>0$, into a single variable for each index $n = 1, 2$, $\dots$.  As phenomenological models of turbulence, shell models retain certain features of the original NSE \citep{Constantin2006, Mohamed2018}.

In this investigation, we use a shell model to explore the long-term behavior of optimal mixing under the enstrophy constraint, both with and without diffusion, for local-in-time and global-in-time optimization scenarios.  This extends the earlier work by Miles and Doering~\cite{Miles2018}, which focused on short-term behavior using a shell model.  While studies like those by Lin \emph{et al.}~\citep{Lin2011} and Zhu \emph{et al.}~\citep{Zhu2024} have examined long-term optimal mixing based on the full PDE, their focus has been on the non-diffusive setting under a local-in-time stirring strategy.  

Our analysis and numerical results under the enstrophy-constrained local-in-time strategy confirm that, with diffusion (i.e., $\kappa > 0$), the scalar length scale eventually becomes limited by a generalized Batchelor scale, proportional to $\sqrt{\kappa/\Gamma}$, where $\Gamma$ is the rate-of-strain.  Once this Batchelor regime is reached, the $H^{-1}$ mix-norm $\|\theta(t)\|_{H^{-1}}$ decays exponentially as $t \to \infty$ at a rate that is independent of $\kappa$ and significantly larger than in the non-diffusive setting, demonstrating that diffusion substantially enhances the long-term mixing rate.  These findings are further supported by the numerical results obtained under the enstrophy-constrained global-in-time strategy at large final times; moreover, for both $\kappa = 0$ and $\kappa > 0$, the global-in-time strategy achieves a strictly higher exponential decay rate than the local-in-time strategy.  Guided by these numerical observations, we derive new conditional lower bounds on the mix-norm $\|\theta(t)\|_{H^{-1}}$ for the shell model that are strictly independent of $\kappa$ for all $\kappa > 0$, and we establish conditional upper bounds on the maximal rate of enhanced dissipation that quantify the fundamental role of diffusivity in the long-time mixing dynamics.

The rest of this paper is organized as follows.  Section~\ref{sec:Eqns} outlines the problem formulation for the enstrophy-constrained local-in-time and global-in-time optimizations aimed at maximizing mixing based on the $H^{-1}$ norm for the shell model, and introduces the numerical methods used to solve these optimization problems.  The numerical results are discussed and analyzed in section~\ref{sec:Results}.  Drawing on the numerical evidence of the existence and length of the Batchelor scale in both strategies, we derive conditional lower bounds on the $H^{-1}$ mix-norm $\|\theta(t)\|_{H^{-1}}$ in section~\ref{sec:LowerBound}. 
Subsequently, section~\ref{sec:UpperBound} presents conditional upper bounds on the maximal rate of enhanced dissipation, thus analyzing the decay rates of the $L^2$ norm $\|\theta(t)\|_{L^{2}}$. Finally, our conclusions are summarized in section~\ref{sec:Conclusion}.

\section{Governing equations and numerical methods} \label{sec:Eqns}

Our starting point is the advection-diffusion equation \eqref{eq:AD} in a periodic box $\Omega = [0, W]^d$, where $W$ is the side length and $d$ is the number of spatial dimensions.

\subsection{A shell model} 
Here, we define the shell model of mixing as in Miles and Doering~\cite{Miles2018}.  Taking the Fourier transform of the advection-diffusion equation \eqref{eq:AD}, we obtain the infinite-dimensional ODE system:
\begin{eqnarray}
&    \partial_t\hat{\theta}(\mathbf{k},t) + \sum_{\mathbf{k'}}i \mathbf{k'}\cdot\hat{\mathbf{u}}(\mathbf{k-k'},t)\hat{\theta}(\mathbf{k'},t) + \kappa|\mathbf{k}|^2\hat{\theta}(\mathbf{k},t) = 0,
\end{eqnarray}
where the wavenumbers $\mathbf{k}$ and $\mathbf{k'}$ belong to $\frac{2\pi}{W} \Z^d$.  Binning the Fourier variables $\hat{\theta}(\mathbf{k},t)$ and $\hat{\mathbf{u}}(\mathbf{k},t)$ with wavenumbers $\sigma^{n-1}k_0 < |\mathbf{k}| < \sigma^n k_0$ into variables $\theta_n(t)$ and $u_n(t)$, respectively, and considering only the mode coupling between neighboring shells, yields the simplest shell model of the form: 
\begin{eqnarray}\label{eq:thetan}
&     \dot{\theta}_n = k_{n-1}u_{n-1}\theta_{n-1} - k_n u_n\theta_{n+1} -\kappa k_n^2 \theta_n,~~~n = 1, 2, 3, \dots,
\end{eqnarray}
where $k_n=\sigma^n k_0$, and $\theta_0 = u_0 \equiv 0$.

One can readily verify that the evolution satisfies the energy identity:
\begin{eqnarray}
    \label{eq:energy_identity}
&\|\theta(t)\|_{L^2}^2 + 2\kappa \int_0^t \|\theta(s)\|_{H^1}^2\, \text{d}s  = \|\theta(0)\|_{L^2}^2,
\end{eqnarray}
where $H^\alpha$ denotes the homogeneous Sobolev (semi)norm, defined for the shell model in Eq.~\eqref{eq:shell_norm} below.

\subsection{Local-in-time and global-in-time optimizations}

In this section, the divergence-free velocity field $\mathbf{u}$ is designed---subject to certain constraints---to maximize either the instantaneous mixing rate (i.e., local-in-time optimization) or the mixing at a prescribed final time $T$ (i.e., global-in-time optimization).  Here we consider the velocity field $\mathbf{u}$ satisfying the \emph{fixed-enstrophy} constraint:
\begin{eqnarray}
&    \int_\Omega |\nabla \times \mathbf{u}|^2 \, \text{d}^d\mathbf{x} = \int_\Omega |\nabla \mathbf{u}|^2 \, \text{d}^d\mathbf{x} = W^d \Gamma^2,
\end{eqnarray}
where $\Gamma$ is the root-mean-square rate-of-strain.

Following Miles and Doering~\cite{Miles2018}, we define the $H^\alpha$ shell-model Sobolev norm as
\begin{equation}
\label{eq:shell_norm}
    \|\psi(t)\|_{H^\alpha}^2 \equiv \sum_{n=1}^{\infty}k_n^{2\alpha} \psi_n^2(t)
\end{equation}
for any vector $\psi = (\psi_1, \psi_2, \psi_3, \dots)$. When $\alpha = -1$, this norm corresponds to the $H^{-1}$ mix-norm $\|\theta(t)\|_{H^{-1}}$, specifically,
\begin{equation}
    \|\theta(t)\|_{H^{-1}}^2 = \sum_{n=1}^{\infty}\frac{\theta_n^2}{k_n^2},
\end{equation}
and when $\alpha = 1$, it provides the expression for enstrophy:
\begin{equation}\label{eq:enstrophy}
    \|\mathbf{u}(t)\|_{H^1}^2 = \sum_{n=1}^{\infty}k_n^2 u_n^2(t) = \Gamma^2.
\end{equation}
Given $u_n$, the values of $\theta_n$ can be obtained by solving Eq.~\eqref{eq:thetan}. In the computations, we use the initial data
\begin{equation}\label{ICs}
    (\theta_1(0),\theta_2(0),\theta_3(0),\dots) = (1,0,0,\dots),
\end{equation}
i.e., starting from the most unmixed state. However, as discussed in the next section, the analytical result may not require such specified initial data \eqref{ICs}.

For local-in-time optimization, the flow field $\mathbf{u}$ realizes
\begin{equation}\label{local-in-time}
    \begin{split}
        &\min_{\mathbf{u}} \frac{\text{d}}{\text{d} t}\|\theta(t)\|_{H^{-1}}^2\\
        &\text{subject to} \quad \|\mathbf{u}\|_{H^1} = \Gamma \;\text{at each time} \; t.
    \end{split}
\end{equation}
A semi-analytical solution for $\mathbf{u}$ is provided in Miles and Doering~\cite{Miles2018} for both energy-constrained and enstrophy-constrained cases.  In this work, we focus on the enstrophy-constrained case.  As informed by our numerical results over an extended time horizon, we demonstrate in the next section that the exponential decay rates of $\theta$ in the last two shells tend to be equal.

For global-in-time optimization at some final time $T$, the flow field $\mathbf{u}$ realizes
\begin{equation}\label{global-in-time}
    \begin{split}
        &\min_{\mathbf{u}} \|\theta(T)\|_{H^{-1}}^2\\
        &\text{subject to} \quad \frac{1}{T}\int_0^T \|\mathbf{u}(t)\|_{H^1}^2 \, \text{d}t = \Gamma^2.
    \end{split}
\end{equation}
Unlike the local-in-time case described in Eq.~\eqref{local-in-time}, where the rate is minimized, here the $H^{-1}$ mix-norm is minimized and the \emph{time-averaged} enstrophy constraint is applied in Eq.~\eqref{global-in-time}. The Lagrange functional corresponding to the optimization problem in Eq.~\eqref{global-in-time} can be expressed as:
\begin{equation}
    \begin{split}
        \mathcal{L}\{\theta,\mathbf{u},\phi,\mu\} = & \frac{1}{2}\sum_{n=1}^{\infty}\frac{\theta_n^2(T)}{k_n^2} + \int_0^T \left\{\sum_{n=1}^{\infty}\phi_n(k_{n-1}u_{n-1}\theta_{n-1}-k_n u_n\theta_{n+1}\right.\\
        &\left.  -\kappa k_n^2\theta_n -\dot{\theta}_n) + \frac{\mu}{2} \left(\sum_{n=1}^{\infty}k_n^2 u_n^2 - \Gamma^2\right) \right\}\, \text{d}t,
    \end{split}
\end{equation}
where $\phi$ is a Lagrange multiplier field enforcing Eq.~\eqref{eq:thetan} and $\mu$ is a scalar `balance' parameter for the enstrophy constraint. At extrema, the first variations (Fr\'{e}chet derivatives) of this functional vanish with respect to $\theta$, $\phi$, $\mathbf{u}$, and $\mu$:
\begin{align}
    \frac{\delta\mathcal{L}}{\delta\theta_n(T)} = 0 &\Rightarrow \frac{\theta_n(T)}{k_n^2} - \phi_n(T) = 0,\label{eq:L1}\\
    \frac{\delta\mathcal{L}}{\delta\theta_n}=0 &\Rightarrow \dot{\phi}_n - k_{n-1}u_{n-1}\phi_{n-1} + k_n u_n\phi_{n+1} - \kappa k_n^2\phi_n = 0,\label{eq:L2}\\
    \frac{\delta\mathcal{L}}{\delta\phi_n}=0 &\Rightarrow \dot{\theta}_n - k_{n-1}u_{n-1}\theta_{n-1} + k_n u_n\theta_{n+1} + \kappa k_n^2\theta_n = 0,\label{eq:L3}\\
    \frac{\delta\mathcal{L}}{\delta u_n}=0 &\Rightarrow k_n\phi_{n+1}\theta_n - k_n\phi_n\theta_{n+1} + \mu k_n^2 u_n = 0,\label{eq:L4}\\
    \frac{\delta\mathcal{L}}{\delta\mu}=0 &\Rightarrow \frac{1}{T}\int_0^T \|\mathbf{u}(t)\|_{H^1}^2 \, \text{d}t - \Gamma^2 = 0.\label{eq:L5}
\end{align}

While Miles and Doering~\cite{Miles2018} have explored the short-time (i.e., small $T$) behavior of optimal mixing for both the local-in-time optimization~\eqref{local-in-time} and global-in-time optimization~\eqref{global-in-time} at $\kappa=0$ and 0.01, this investigation focuses on the long-time (i.e., large $T$) asymptotic behavior of optimal mixing across varying $\kappa$.

\subsection{Numerical methods} 
For local-in-time optimization, the problem in Eq.~\eqref{local-in-time} can be solved numerically by integrating Eq.~\eqref{eq:thetan} based on the semi-analytical solution of $\mathbf{u}$ proposed in Miles and Doering~\citep{Miles2018}.  

For global-in-time optimization, Eqs.~\eqref{eq:L1}--\eqref{eq:L5} were solved numerically in Miles and Doering~\citep{Miles2018} using a gradient-based method (steepest descent in $u_n$ and steepest ascent in $\mu$).   This method has a linear convergence rate and works well for non-stiff systems \citep{Nocedal2006}, such as those with small final time $T$~\citep{Miles2018}.  However, for large $T$, the system becomes stiff, leading to dramatically slower convergence or even failure to converge.  In this study, we solve Eqs.~\eqref{eq:L1}--\eqref{eq:L5} using a time-stepping method (an analogous gradient-based approach) as in Ref.~\cite{Wen2015PRE} to achieve the asymptotic regime of optimal mixing for sufficiently large $T$.

We apply steepest descent in $u_n$ and steepest ascent in $\mu$ by introducing a `pseudo-time' $\tau$ into Eqs.~\eqref{eq:L4} and \eqref{eq:L5}, thereby relaxing the optimality condition and budget constraint, respectively:
 \begin{align}
     \frac{\partial u_n}{\partial \tau} + \frac{\delta\mathcal{L}}{\delta u_n}=0& \Rightarrow \frac{\partial u_n}{\partial \tau} + (k_n\phi_{n+1}\theta_n - k_n\phi_{n}\theta_{n+1}+\mu k_n^2 u_n) =0,\label{eq:L4'}\\
     \frac{\partial \mu}{\partial \tau} -  \frac{\delta\mathcal{L}}{\delta\mu}=0&  \Rightarrow \frac{\partial \mu}{\partial \tau} - \frac{1}{T}\int_0^T \|\mathbf{u}(t)\|_{H^1}^2 \text{d}t + \Gamma^2 = 0.\label{eq:L5'}
\end{align}
These `time-dependent' equations are then advanced until they converge to a stationary solution ($\partial_{\tau} = 0$), which corresponds to the solution of the original Eqs.~\eqref{eq:L4} and \eqref{eq:L5}.  In this work, we update $u_n$ and $\mu$ as follows:
\begin{align}
    u_n^{(s+1)} &= u_n^{(s)}
        - \frac{\Delta\tau_u}{k_n^\eta}
          \left(\frac{\delta\mathcal{L}}{\delta u_n}\right)^{\!(s)},
        \label{eq:L4''}\\
    \mu^{(s+1)} &= \mu^{(s)}
        + \Delta\tau_\mu
          \left(\frac{\delta\mathcal{L}}{\delta\mu}\right)^{\!(s)},
        \label{eq:L5''}
\end{align}
where ${\Delta \tau_u}/{k_n^\eta}$ and $\Delta \tau_{\mu}$ are `time steps' advancing $u_n$ and $\mu$, respectively.  The parameter $\eta>0$ ensures smaller steps at higher wavenumbers, which is crucial for convergence at large $T$.  Initially, we set $\eta = 1$ at the early stage of each computation and increase $\eta$ if necessary after certain iterations to improve convergence.

Computations are performed for a range of $\kappa$ values from $\kappa = 0$ to $\kappa = 0.064$ with $\sigma = \sqrt{2}$, $k_0 = 1/\sqrt{2}$, and $\Gamma = 1$.  In the global-in-time scenario, the numerical algorithm starts with initial guesses $\theta(t) = \theta^{(0)}(t)$, $\mathbf{u}(t) = \mathbf{u}^{(0)}(t)$, and $\mu = \mu^{(0)}$.  At each step $s$, the following procedure is implemented: (i)~Eq.~\eqref{eq:L3} is integrated forward in time to update $\theta(t)$; (ii)~Eq.~\eqref{eq:L2} is integrated backward in time to update $\phi(t)$, using the terminal condition $\phi_n(T) = \theta_n(T)/k_n^2$ as an `initial' condition; and (iii)~$\mathbf{u}(t)$ and $\mu$ are updated via Eqs.~\eqref{eq:L4''} and \eqref{eq:L5''}.
This process continues until the following convergence criteria are met:
\begin{eqnarray}
    \left\|\frac{\delta\mathcal{L}}{\delta u_n}\right\|_\infty < 10^{-11}
    \quad\text{and}\quad
    \left|\frac{\delta\mathcal{L}}{\delta\mu}\right| < 10^{-14}.
\end{eqnarray}

\section{Numerical results and discussion} \label{sec:Results}
In this section, we present numerical results for large $T$ to examine the asymptotic behavior of optimal mixing based on the $H^{-1}$ mix-norm.

\subsection{Local-in-time optimization}
\label{sec:Results_Local}

Consider the local-in-time strategy with fixed enstrophy for the infinite system, starting from the most unmixed state as described in Eq.~\eqref{ICs}.  Figure~\ref{fig:localtheta} illustrates the evolution of the state $\theta_n$ under the optimal control $u_n$ shown in Fig.~\ref{fig:localu}.  In this approach, each component of the control vector $\mathbf{u}$ is applied sequentially and in a piecewise manner over time. The state vector $\theta$ exhibits a distinctive pattern in the local-in-time strategy: only two scales of filaments are present simultaneously; specifically, as the larger scale diminishes, the smaller scale emerges.

\begin{figure}[t]
  \centering
  \includegraphics[width=0.9\textwidth]{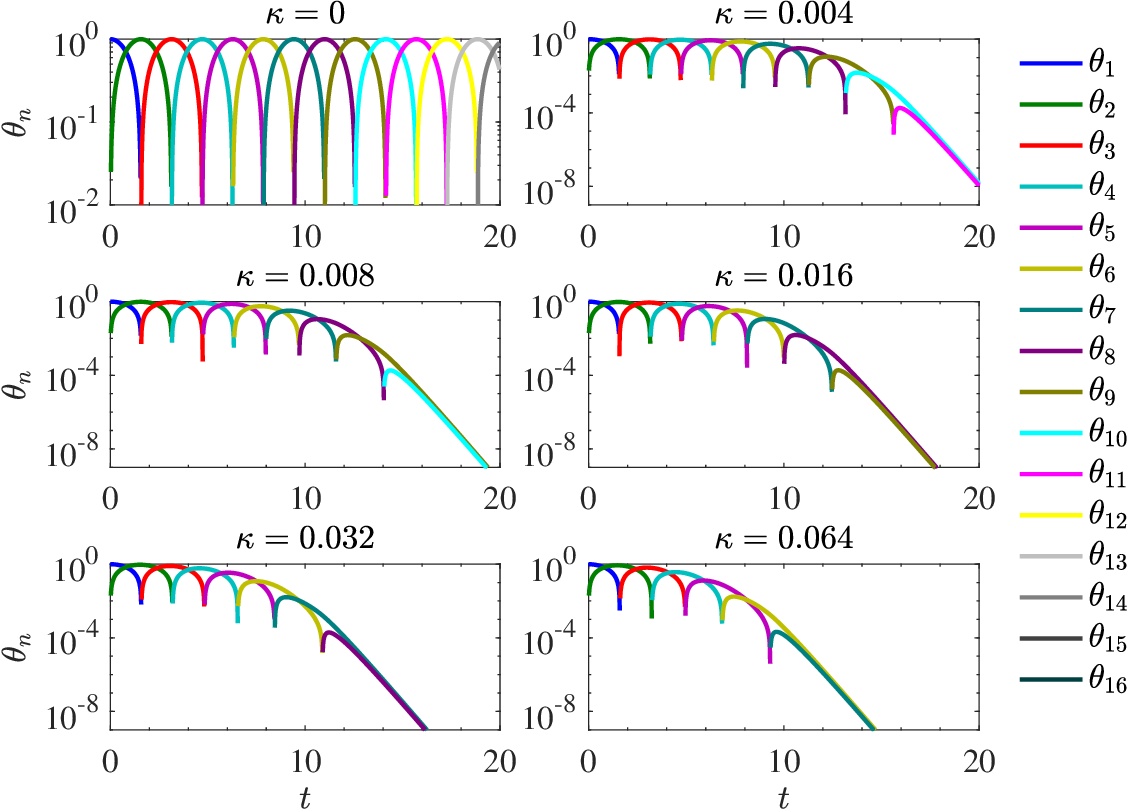}
  \caption{Evolution of the state $\theta_n$ under the local-in-time strategy for different diffusivities $\kappa$.  The mixing process starts from the most unmixed state as described in Eq.~\eqref{ICs}.  For $\kappa = 0$, the state can be described by a sine/cosine function sequentially at each time interval, matching well with the analytical solution~\eqref{eq:soln_localkappa0} given by Miles and Doering~\cite{Miles2018}.  When $\kappa > 0$, the passive scalar spectrum exhausts at a certain small length scale, known as the Batchelor length scale, which increases with $\kappa$.  For $\kappa = 0.004$, 0.008, 0.016, 0.032 and 0.064, the corresponding Batchelor shell numbers $n_b$ are 11, 10, 9, 8 and 7, respectively, aligning perfectly with the prediction by Eq.~\eqref{eq:nb2}.
   }
  \label{fig:localtheta}
\end{figure} 

\begin{figure}[t]
  \centering
  \includegraphics[width=0.9\textwidth]{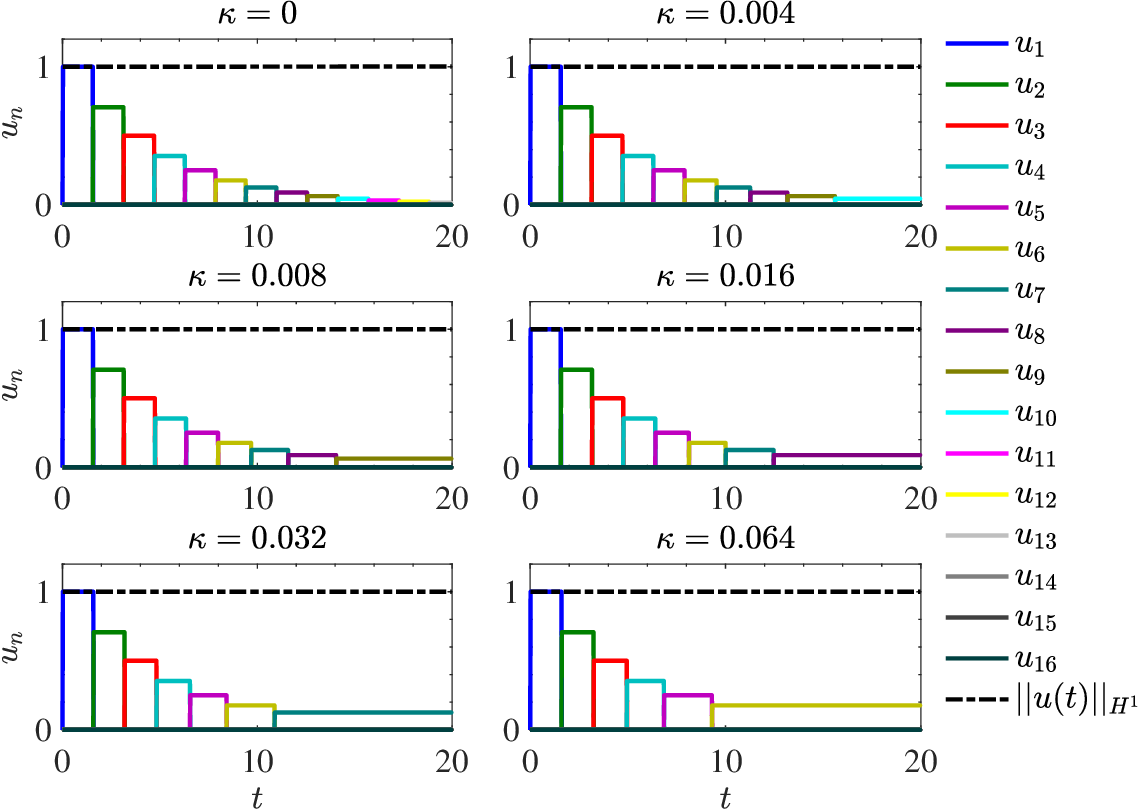}
  \caption{Evolution of the optimal control $u_n$ under the local-in-time strategy for different diffusivities $\kappa$.  Here, $u_n$ is computed following the approach outlined by Miles and Doering~\cite{Miles2018}.  Under the local-in-time strategy~\eqref{local-in-time}, the enstrophy budget $\|\mathbf{u}\|_{H^1}$ remains constant at $\Gamma=1$ throughout the entire process.
   }
  \label{fig:localu}
\end{figure} 

Miles and Doering~\cite{Miles2018} derived a theoretical solution for $\kappa = 0$.  During the interval $t_n \le t \le t_{n+1}$, where $t_n = (n-1)\pi/(2\Gamma)$ marks the time when the state vector is fully within the $n$-th shell (i.e., $\theta_n = 1$ and $\theta_{m\ne n} = 0$), the optimal control and corresponding state can be expressed as: 
\begin{eqnarray}\label{eq:soln_localkappa0}
 &       u_n(t) = \dfrac{\Gamma}{k_n}, \;\; \theta_n = \cos(\Gamma(t-t_n)) \;\; \text{and} \;\; \theta_{n+1} = \sin(\Gamma(t-t_n)),
\end{eqnarray}
with all other components of $\mathbf{u}$ and $\theta$ being zero.
For this solution, the mix-norm at $t = t_n$ is given by 
\begin{eqnarray}\label{eq:mixnorm_localkappa0}
 &       \|\theta(t_n)\|_{H^{-1}} = \|\theta(0)\|_{H^{-1}}e^{-\ln(2)\Gamma t_n/\pi},
\end{eqnarray}
given that their $\sigma = 2$. 
As demonstrated in Figs.~\ref{fig:localtheta} and~\ref{fig:localu}, our numerical results for $\kappa=0$ align well with the theoretical solution \eqref{eq:soln_localkappa0}.

As $t$ increases, the optimal mixing without diffusion (i.e., $\kappa=0$) produces a self-similar cascade of passive scalar fluctuations to increasingly smaller scales.  However, when diffusion is introduced (i.e., $\kappa>0$), the scalar length scale becomes limited by a generalized Batchelor scale, which corresponds to the smallest length scale observed in Fig.~\ref{fig:localtheta} after a long time.  As shown in Fig.~\ref{fig:localtheta}, the mixing eventually involves two adjacent small-scale shells that share the same decay rate as $t\to\infty$. With increasing $\kappa$, the Batchelor scale becomes larger and is reached more quickly (Figs.~\ref{fig:localtheta} and~\ref{fig:localu}).  These findings are consistent with recent studies in both the shell model and PDE contexts~\citep{Miles2018, Lin2011, Miles2018diffusion}.

It has been demonstrated by Miles and Doering~\citep{Miles2018} that for $\kappa > 0$ under the local-in-time strategy, the fixed point of Eq.~\eqref{eq:thetan} at the origin changes from a stable spiral to a stable node as $n$ increases to a critical shell $n_c$~\eqref{eq:nc}.  This transition implies that the state trajectory \emph{cannot} progress to the next plane (i.e., the $\theta_{n_c+1}$--$\theta_{n_c+2}$ plane) since it must intercept the $\theta_{n_c+1}$ axis to do so (see their figure 4).  As a result, the state trajectory is ultimately confined to the $\theta_{n_c}$--$\theta_{n_c+1}$ plane, meaning that the shell number corresponding to the smallest length scales of fluctuations in scalar concentration, i.e., the Batchelor shell $n_b$, can be expressed as $n_b = n_c + 1$.  Based on this transition feature, Miles and Doering~\citep{Miles2018} derived the expression for $n_c$ with $\sigma = 2$.  Below we extend their derivation to an arbitrary $\sigma > 1$.
\begin{theorem}
\label{thm1}
For $\kappa > 0$ under the local-in-time strategy, the fixed point at the origin transitions from a stable spiral to a stable node when
\begin{equation}\label{eq:nc}
    n\geq n_c = \left\lceil\frac{1}{2}\log_\sigma \frac{2\Gamma}{(\sigma^2-1)\kappa k_0^2}\right\rceil.
\end{equation}
\begin{proof}
As noted in Miles and Doering~\cite{Miles2018} and also shown in Fig.~\ref{fig:localu}, on the interval $[t_n, t_{n+1}]$, the optimal control with the enstrophy-constrained local-in-time strategy is $u_n(t) = {\Gamma}/{k_n}$, with the other components of $u$ set to zero, as long as $\theta_n(t)$ and $\theta_{n+1}(t)$ remain positive at time $t$. In the $\theta_n$--$\theta_{n+1}$ plane, the local-in-time strategy is governed piecewise in time by the two-state ODE system:
\begin{equation}\label{eq:thetaODE_local}
    \frac{d}{dt}\begin{pmatrix}
    \theta_{n}\\\theta_{n+1}
    \end{pmatrix}
    = \begin{pmatrix}
    -\kappa k_{n}^2 & -k_{n} u_{n}\\
    k_{n} u_{n}  & -\kappa k_{n+1}^2
\end{pmatrix}
\begin{pmatrix}
    \theta_{n}\\\theta_{n+1}
    \end{pmatrix}.
\end{equation}
Solving this eigenvalue problem yields
\begin{equation}\label{eq:eigen}
    \lambda_{\pm} = -\frac{1}{2}\kappa(k_{n}^2 + k_{n+1}^2) \pm \frac{1}{2}\sqrt{\kappa^2 (k_{n}^2 - k_{n+1}^2)^2 - 4\Gamma^2},
\end{equation}
and the corresponding eigenvectors in the $\theta_{n}$--$\theta_{n+1}$ plane
\begin{equation}
    \Theta_{\pm} = \begin{pmatrix}
        \frac{1}{2}\kappa(k_{n+1}^2 - k_{n}^2) \pm \frac{1}{2}\beta_{n} \\
        \Gamma
    \end{pmatrix},
\end{equation}
where $\beta_n = \sqrt{\kappa^2(k_{n+1}^2-k_{n}^2)^2 - 4\Gamma^2}$.  Therefore, the solution of Eq.~\eqref{eq:thetaODE_local} can be expressed as
\begin{equation}\label{eq:thetasoln}
\begin{split}
 \theta_{n}(t) =  \frac{\theta^i_{n}}{\beta_{n}}\left(\alpha_{+}e^{\lambda_+t} - \alpha_{-}e^{\lambda_-t}\right) \;\;\text{and}\;\; \theta_{n+1}(t) =  \frac{\theta^i_{n}\Gamma}{\beta_{n}}(e^{\lambda_{+}t} - e^{\lambda_{-}t}),
\end{split}
\end{equation}
where $\theta^i_n$ represents the initial value of $\theta_n$ in the $n$-{th} time interval $[t_n, t_{n+1}]$, and $\alpha_{\pm} = \frac{1}{2}\kappa(k_{n+1}^2 - k_{n}^2) \pm \frac{1}{2}\beta_{n}$.

When the imaginary parts of $\lambda_{\pm}$ become zero, the origin, which is one fixed point of Eq.~\eqref{eq:thetaODE_local}, transitions from a stable spiral to a stable node.  At this point, the trajectory can no longer move to the next plane, i.e., the $\theta_{n+1}$--$\theta_{n+2}$ plane.  This occurs when the following inequality is satisfied:
\begin{eqnarray}
  &  \kappa^2(k_{n}^2 - k_{n+1}^2)^2 \geq 4\Gamma^2, 
\end{eqnarray}
indicating that the trajectory will remain within the $\theta_n$--$\theta_{n+1}$ plane.  Simplifying this inequality, we find 
\begin{eqnarray*}
  &  n \geq n_c = \left\lceil\dfrac{1}{2}\log_\sigma \dfrac{2\Gamma}{(\sigma^2-1)\kappa k_0^2}\right\rceil.
\end{eqnarray*}
This completes the proof.
\end{proof}
\end{theorem}

Given Eq.~\eqref{eq:nc}, the Batchelor shell number can be expressed as
\begin{eqnarray}\label{eq:nb}
    &n_b = n_c + 1 = \left\lceil\dfrac{1}{2}\log_\sigma\dfrac{2\Gamma}{(\sigma^2-1)\kappa k_0^2}\right\rceil +1.
\end{eqnarray}
Correspondingly, the Batchelor length scale is given by:
\begin{eqnarray}\label{eq:lb}
   &l_b \sim \dfrac{1}{k_{n_b}} \approx \sqrt{\dfrac{(\sigma^2-1)\kappa}{2\sigma^2\Gamma}}.
\end{eqnarray}
Substituting $\sigma=\sqrt{2}$, $k_0=1/\sqrt{2}$ and $\Gamma=1$ into Eq.~\eqref{eq:nb} yields 
\begin{eqnarray}\label{eq:nb2}
    &n_b = \left\lceil3 - \log_{\sqrt{2}} \sqrt{\kappa}\right\rceil.
\end{eqnarray}
For $\kappa = 0.001\times 2^{m}$ with integer $m$, this simplifies to 
\begin{eqnarray}\label{eq:nb2a}
    &n_b = 13 - m,
\end{eqnarray}
which matches precisely with the numerical data presented in Fig.~\ref{fig:localtheta}.  Furthermore, Eq.~\eqref{eq:thetasoln} indicates that as $t\to\infty$, the final two shells, specifically the $n_c$-th and $(n_{c}+1)$-th shells, will decay exponentially at the same rate $\lambda_+$, consistent with the behavior observed in Fig.~\ref{fig:localtheta}.

\begin{figure}[t]
  \centering
  \includegraphics[width=0.8\textwidth]{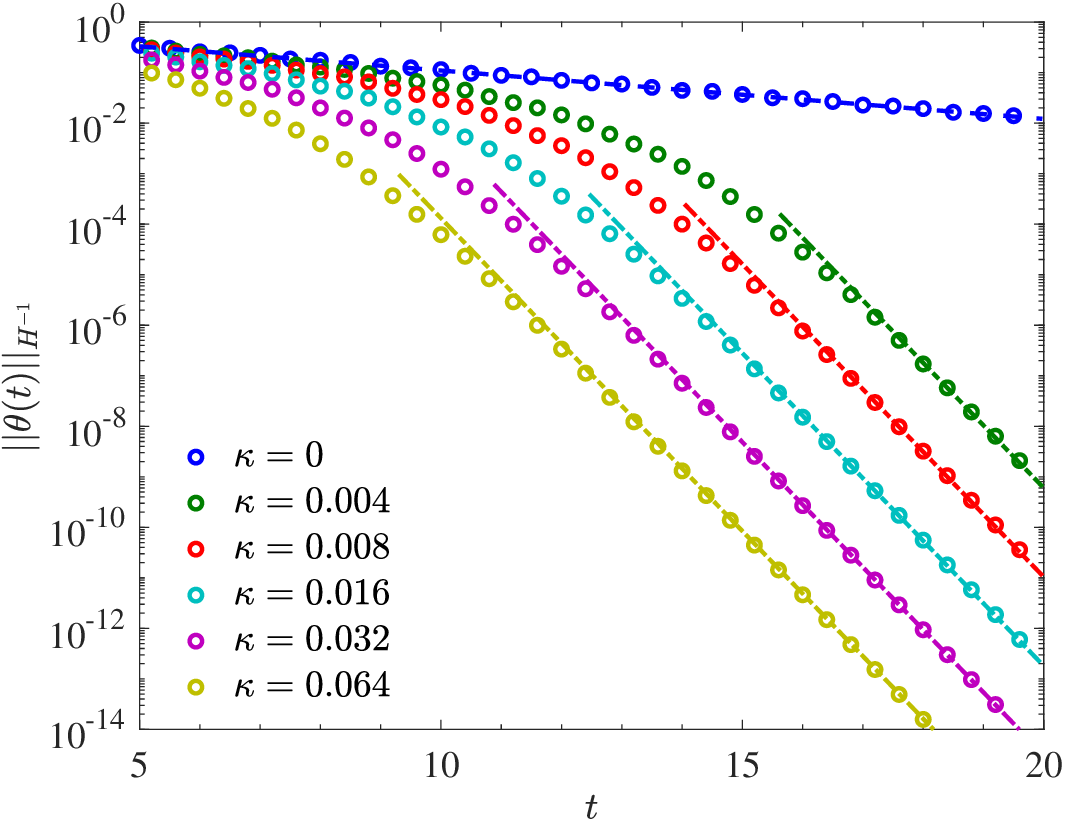}
  \caption{Evolution of the mix-norm $\|\theta(t)\|_{H^{-1}}$ under the local-in-time strategy for different diffusivities $\kappa$.  Circles denote numerical results; the blue dashed line shows the analytical prediction at $\kappa=0$, evaluated at $t_n=n\pi/2$ using~\eqref{eq:mixnorm_localkappa0}; and the dashed-dot lines represent the large-$t$ fits for $\kappa>0$, given by $\exp({-2.852t - 4.056\log_2(\kappa/0.004) + 35.80})$.  The decay rate $-2.852$ in the fitting function is determined from the eigenvalue $\lambda_+$ in~\eqref{eq:eigen}, using the rounded value of $n_b$. 
As $t\rightarrow\infty$, the enstrophy-constrained local-in-time strategy exhibits exponential decay.  For all $\kappa > 0$, the decay follows the same `Batchelor mixing rate', $\|\theta(t)\|_{H^{-1}}\sim \exp({-2.852t - 4.056\log_2(\kappa/0.004) + 35.80})$, while the diffusionless case $\kappa = 0$ displays a distinct decay rate, $\|\theta(t)\|_{H^{-1}}\approx \|\theta(0)\|_{H^{-1}}\exp({-\ln(2)\Gamma t/\pi}) \approx \exp({-0.2206t})$. 
  }
  \label{fig:localnorm}
\end{figure} 

Figure~\ref{fig:localnorm} shows the evolution of the mix-norm $\|\theta(t)\|_{H^{-1}}$ under the local-in-time strategy for different values of diffusivity $\kappa$.  In the long term, both the enstrophy-constrained local-in-time strategies, with and without diffusion, exhibit exponential decay, consistent with recent studies in the PDE context~\citep{Lin2011, Seis2013, Iyer2014, Yao2017, Alberti2019}.  Interestingly, our numerical results highlight distinct exponential decay rates between the strategies with and without diffusion.  For $\kappa=0$,  the numerical data aligns excellently with the analytical prediction, $\|\theta(t)\|_{H^{-1}}\approx \exp(-0.2206t)$, from Eq.~\eqref{eq:mixnorm_localkappa0}.  For $\kappa>0$, the scalar field eventually becomes constrained by the Batchelor length scale, and the mixing is determined by the $(n_b - 1)$-th \& $n_b$-th shells (equivalently, the $n_c$-th \& $(n_c+1)$-th shells).  Consequently, the long-term Batchelor decay rate can be predicted by
\begin{eqnarray}
  & \lambda_b \approx \lambda_{+} = -\dfrac{1}{2}\kappa(k_{n_b - 1}^2 + k_{n_b}^2) + \dfrac{1}{2}\sqrt{\kappa^2 (k_{n_b-1}^2 - k_{n_b}^2)^2 - 4\Gamma^2}. \label{eq:Batchelorrate_local1}
\end{eqnarray}
Substituting the non-rounded $n_b$ into Eq.~\eqref{eq:Batchelorrate_local1} gives
\begin{eqnarray}
   &\lambda_b \approx -\dfrac{\Gamma (\sigma^2 + 1)}{\sigma^2 - 1}, \label{eq:Batchelorrate_local2}
\end{eqnarray}
which is \emph{independent} of $\kappa$, consistent with the numerical results in Fig.~\ref{fig:localnorm} and observations in the PDE context where the long-term rate of mixing under the enstrophy-constrained local-in-time strategy is unaffected by diffusion strength \citep{Miles2018diffusion}.  For the shell model, substituting the rounded $n_b$ and other parameters into Eq.~\eqref{eq:Batchelorrate_local1} yields $\lambda_b = -2.852$.  As shown in Fig.~\ref{fig:localnorm}, for $\kappa>0$, our numerical data shows a strong agreement with this prediction:
\begin{eqnarray}\label{eq:mixnorm_model_b}
   &\|\theta(t)\|_{H^{-1}} \sim \exp\!\left({-2.852t - 4.056\log_2(\kappa/0.004) + 35.80}\right) \;\; \text{as} \;t\to\infty.
\end{eqnarray}
Moreover, our numerical results and analysis suggest that Eq.~\eqref{eq:mixnorm_model_b} holds for any infinitesimal positive $\kappa$, indicating that the enstrophy-constrained local-in-time strategy in the limit $\kappa\to0$ is not equivalent to that at $\kappa=0$.

Our numerical results reveal that for the enstrophy-constrained local-in-time strategy, while diffusion prevents the mixing from generating length scales finer than the Batchelor scale, it significantly enhances the long-term mixing rate compared to the case without diffusion. Additionally, increasing the strength of diffusion further improves mixing effectiveness by reducing the prefactor of the exponential decay.

\subsection{Global-in-time optimization}
\label{sec:Results_Global}

\begin{figure}[t]
  \centering
  \includegraphics[width=0.9\textwidth]{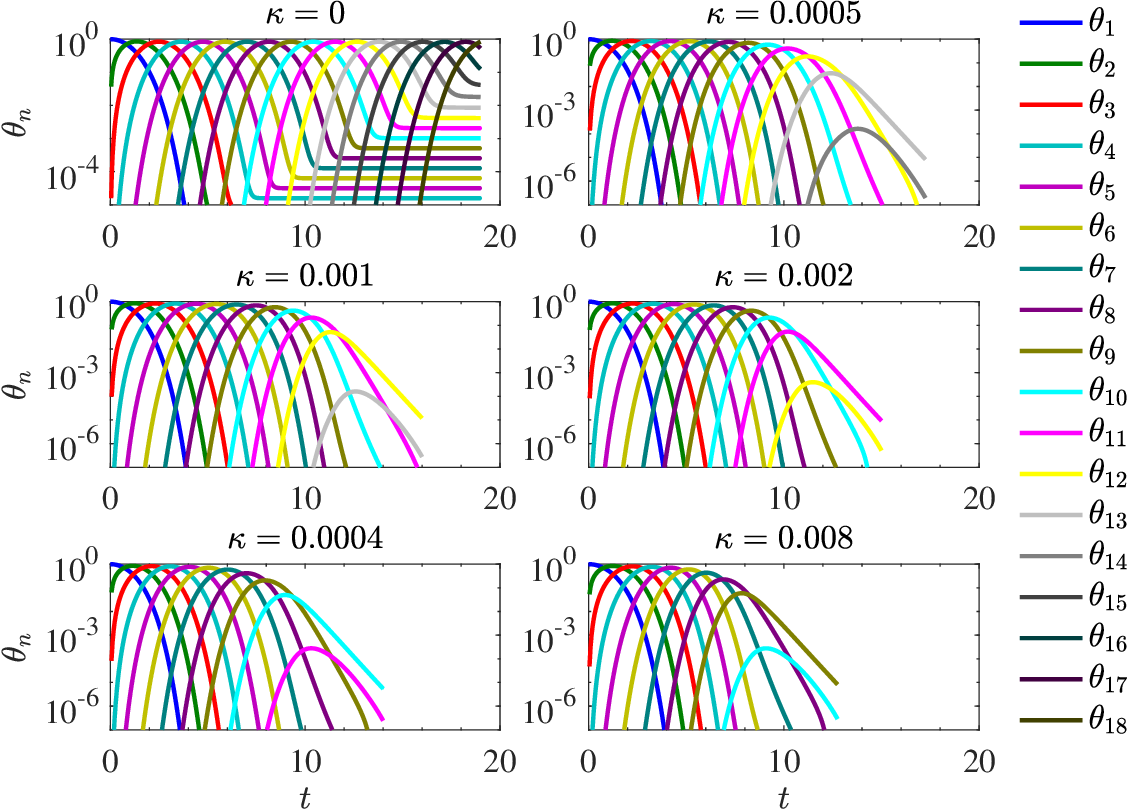}
  \caption{Evolution of the state $\theta_n$ under the global-in-time strategy at large final time $T$ for various diffusivities $\kappa$. Specifically, $T=19$ for $\kappa=0$; $T=17.25$ for $\kappa=0.0005$; $T=16$ for $\kappa=0.001$; $T=15$ for $\kappa=0.002$; $T=14$ for $\kappa=0.004$; and $T=12.75$ for $\kappa=0.008$. The mixing process starts from the maximally unmixed initial condition given in Eq.~\eqref{ICs}.  For $\kappa>0$, as time progresses, the global-in-time strategy converges to a limiting length scale---the Batchelor length scale---which increases with $\kappa$. For $\kappa=0.0005$, 0.001, 0.002, 0.004, and 0.008, the corresponding Batchelor shell numbers $n_b$ are 14, 13, 12, 11, and 10, respectively, in exact agreement with the theoretical predictions of Eq.~\eqref{eq:nb2}.}
  \label{fig:globaltheta}
\end{figure} 

In this subsection, we consider the global-in-time strategy with fixed enstrophy for the infinite system, starting from the most unmixed state~\eqref{ICs}.  A natural question arises: can the global-in-time strategy overcome the limitation imposed by the Batchelor length scale for $\kappa>0$ and thereby produce long-term mixing behavior distinct from that of the local-in-time strategy?

Figures~\ref{fig:globaltheta} and~\ref{fig:globalu} show the evolution of the state $\theta_n$ and the corresponding optimal control $u_n$, obtained by solving Eqs.~\eqref{eq:L1}--\eqref{eq:L5}.  In contrast to the local-in-time strategy, where only two length scales are active at any given time (Fig.~\ref{fig:localtheta}), the global-in-time strategy leads to the emergence of multiple shells as time progresses (Figs.~\ref{fig:globaltheta} and~\ref{fig:globalu}), although many of these shells remain at relatively small amplitudes.  Despite this richer multiscale structure, for $\kappa>0$ the mixing achieved by the global-in-time strategy still appears to be constrained by the Batchelor length scale, in close analogy with the local-in-time case and consistent with the theoretical prediction~\eqref{eq:lb}.  As $\kappa$ increases, the Batchelor scale becomes larger and is reached more rapidly (see Fig.~\ref{fig:globaltheta}). Compared with the local-in-time strategy, the Batchelor shell emerges earlier under the global-in-time strategy, and once it appears, the mixing dynamics are dominated by the Batchelor shell and its immediate neighbor, namely the $n_b$-th and $(n_b-1)$-th shells, as illustrated in Fig.~\ref{fig:globaltheta}.

\begin{figure}[t]
  \centering
  \includegraphics[width=0.9\textwidth]{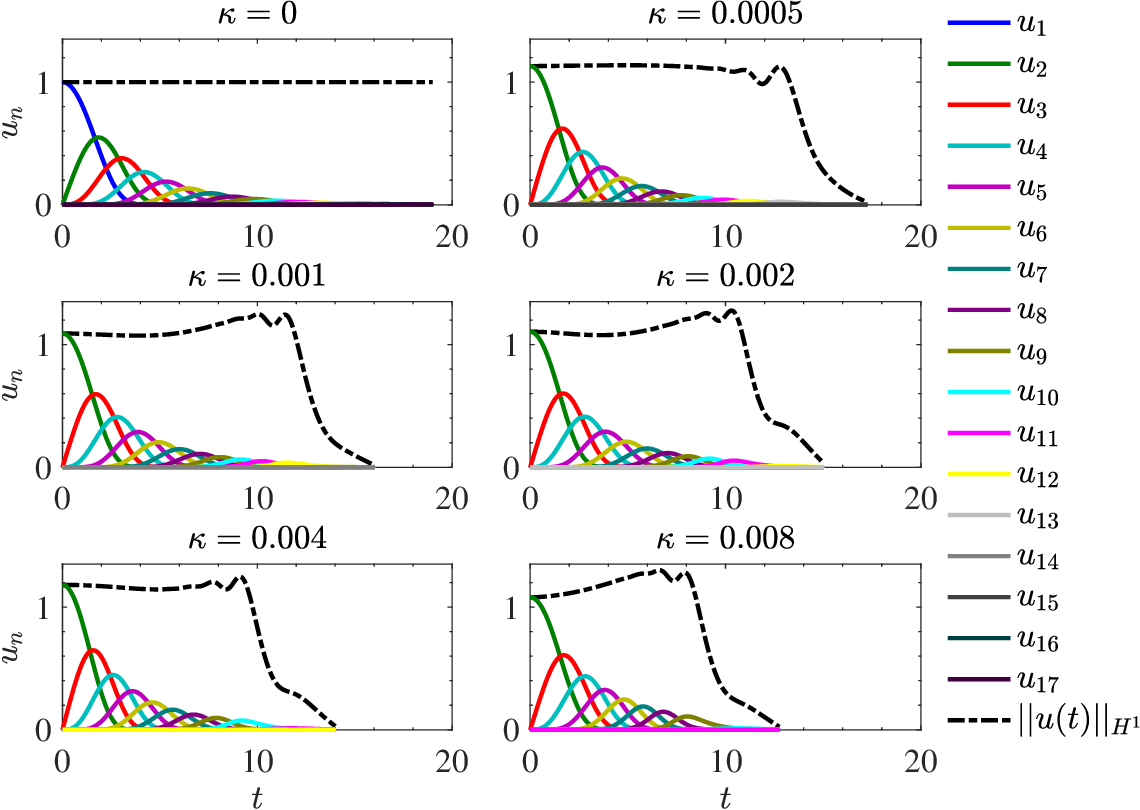}
  \caption{Evolution of the optimal control $u_n$ and the enstrophy budget $\|\mathbf{u}(t)\|_{H^1}$ under the global-in-time strategy at large final time $T$ for various diffusivities $\kappa$. The optimal control $u_n$ corresponds to the state $\theta_n$ shown in Fig.~\ref{fig:globaltheta}.  Prior to reaching the Batchelor length scale, the optimal strategy distributes the enstrophy budget nearly uniformly in time, thereby ensuring efficient control throughout the mixing process.
   }
  \label{fig:globalu}
\end{figure} 

\begin{figure}[t!]
  \centering
  \includegraphics[width=0.8\textwidth]{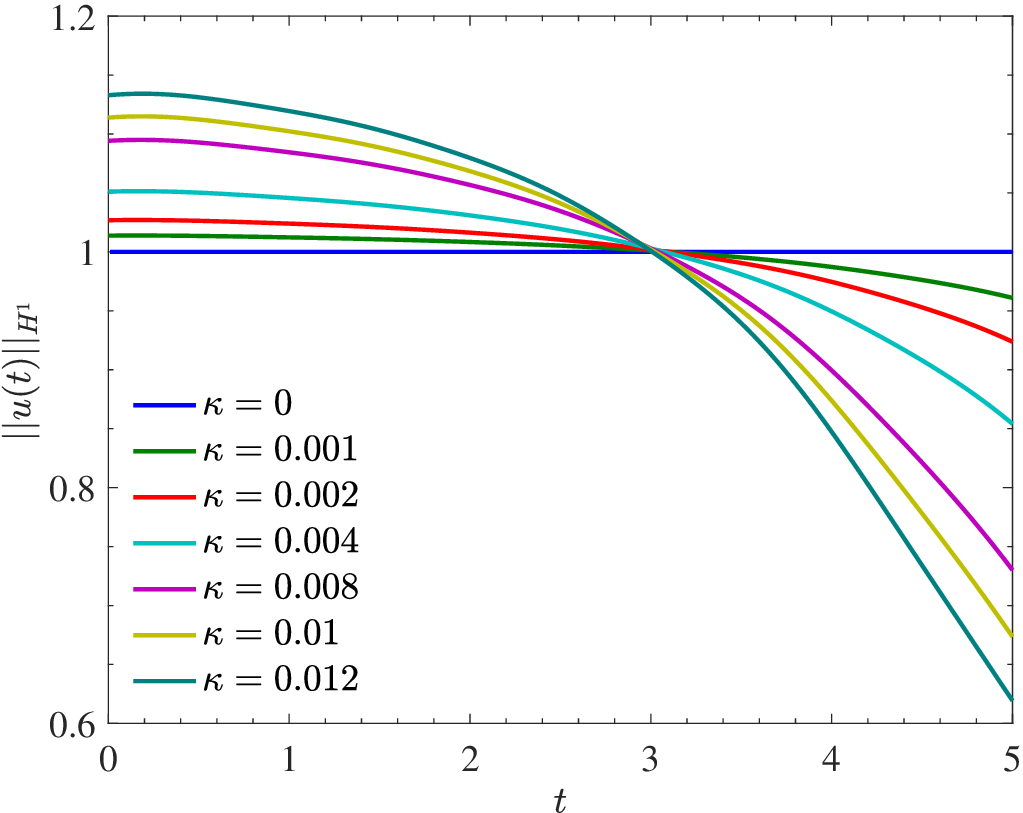}
  \caption{Evolution of the enstrophy budget $\|\mathbf{u}(t)\|_{H^1}$ under the global-in-time strategy with final time $T=5$ for different diffusivities $\kappa$.  As $\kappa$ increases at this short final time, the optimal strategy allocates a larger portion of the enstrophy budget earlier rather than later in the mixing process.}
  \label{fig:budgetT5}
\end{figure} 
\begin{figure}[ht!]
  \centering
  \includegraphics[width=0.8\textwidth]{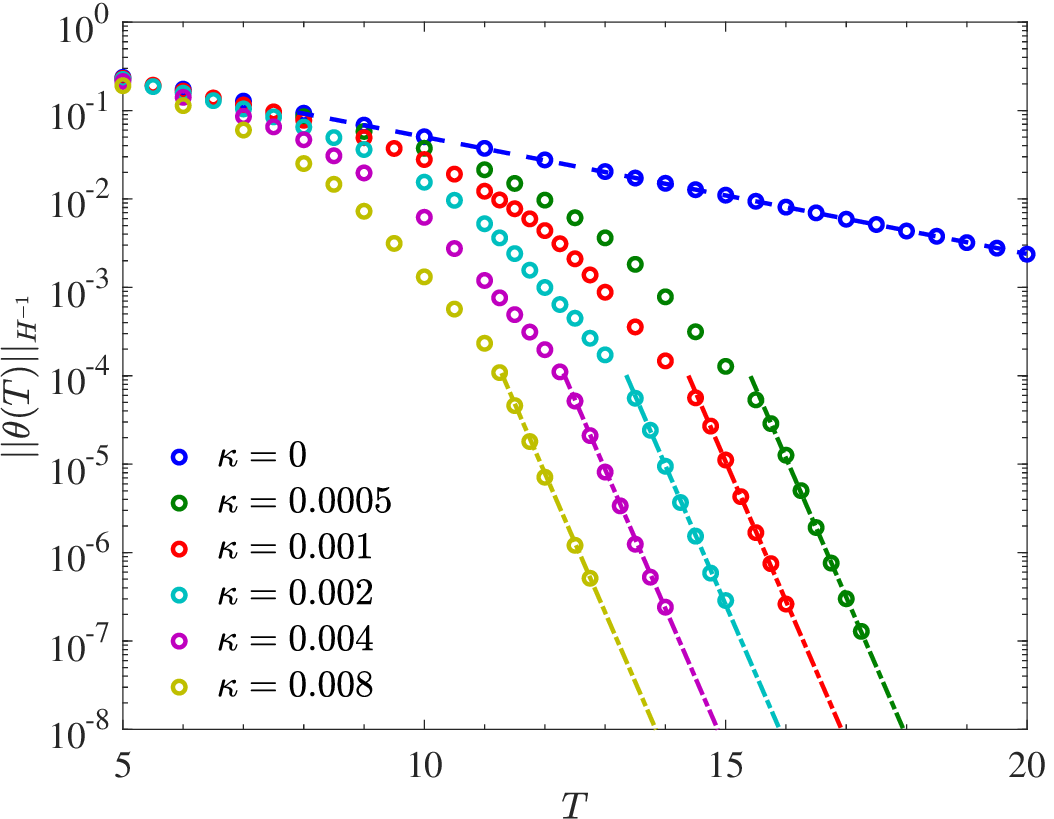}
  \caption{Mix-norm $\|\theta(T)\|_{H^{-1}}$ under the global-in-time strategy as a function of final time $T$ for different diffusivities $\kappa$.  Circles denote numerical results; the blue dashed line shows the large-$T$ fit for $\kappa=0$, given by $\exp({-0.3045T+0.05337})$; and the dashed-dot lines represent the large-$T$ fits for $\kappa>0$, given by $\exp({-3.626t - 3.725\log_2(\kappa/0.004) + 35.49})$.  The mix-norm evolution under the global-in-time strategy exhibits behavior similar to that of the local-in-time strategy shown in Fig.~\ref{fig:localnorm}.
  At large $T$, the enstrophy-constrained global-in-time strategy displays exponential decay.  For $\kappa > 0$, the decay follows the same Batchelor mixing rate, $\|\theta(T)\|_{H^{-1}}\sim \exp({-3.626t - 3.725\log_2(\kappa/0.004) + 35.49})$, whereas for $\kappa = 0$, a distinct decay rate is observed, $\|\theta(T)\|_{H^{-1}}\sim  \exp({-0.3045T+0.05337})$. 
 }
  \label{fig:globalnorms}
\end{figure} 

Below, we discuss how the enstrophy budget $\|\mathbf{u}(t)\|_{H^1}$ is allocated depending on the final time $T$.
For $\kappa = 0$, our numerical results indicate that it is optimal to distribute the enstrophy budget uniformly over time, independent of the magnitude of $T$ (see Figs.~\ref{fig:globalu} and~\ref{fig:budgetT5}). 
This behavior is consistent with previous studies in both the shell-model setting~\citep{Miles2018} and the PDE context~\citep{Mathew2007}---both focused on small-$T$ regimes.
For $\kappa > 0$ and small $T$, both our computations and those of Miles and Doering~\cite{Miles2018} suggest that it is optimal to expend a larger fraction of the enstrophy budget early in time (Fig.~\ref{fig:budgetT5}). This indicates that enstrophy is more effective when applied at larger spatial scales, prior to the influence of the Batchelor length scale. In contrast, for large $T$, our computational results show that, before the Batchelor length scale is reached, it becomes optimal to distribute the enstrophy budget more uniformly over time (Fig.~\ref{fig:globalu}).

Figure~\ref{fig:globalnorms} shows the mix-norm $\|\theta(T)\|_{H^{-1}}$ under the global-in-time strategy as a function of the final time $T$ for different values of $\kappa$.  As in the local-in-time case, the enstrophy-constrained global-in-time strategy---both with and without diffusion---exhibits exponential decay at large $T$, in qualitative agreement with recent studies in the PDE setting \citep{Lin2011, Seis2013, Iyer2014, Yao2017, Alberti2019}.  Moreover, our numerical results indicate that the global-in-time strategies with and without diffusion display distinct exponential decay rates.  For $\kappa = 0$, the mix-norm behaves as $\|\theta(T)\|_{H^{-1}}\sim\exp({-0.3045T+0.05337})$.  In contrast, for $\kappa > 0$, $\|\theta(T)\|_{H^{-1}}\sim \exp({-3.626t - 3.725\log_2(\kappa/0.004) + 35.49})$, with a \emph{$\kappa$-independent} exponential decay rate, similar to that observed for the local-in-time strategy.  Consequently, as in the local-in-time case, the enstrophy-constrained global-in-time strategy in the limit $\kappa \to 0$ is not equivalent to the strategy at $\kappa = 0$.  Notably, for both $\kappa = 0$ and $\kappa > 0$, the global-in-time strategy achieves a strictly larger exponential decay rate than the local-in-time strategy.

\section{Conditional lower bounds}
\label{sec:LowerBound}

Both the analytical and numerical results for the shell model presented in section~\ref{sec:Results_Local} confirm that, under the enstrophy-constrained local-in-time strategy with diffusion, optimal mixing is limited by the Batchelor length scale, with the corresponding shell number given by Eq.~\eqref{eq:nb}. The numerical results in section~\ref{sec:Results_Global} further suggest that the enstrophy-constrained global-in-time strategy is likewise constrained by the same Batchelor scale, although it achieves a higher exponential decay rate.  Below, we derive a lower bound on the mix-norm $\|\theta(t)\|_{H^{-1}}$ for enstrophy-constrained flows. Under Assumption~\ref{assump:Batchelor}, for $\kappa > 0$ we consider a general truncation level $N \geq 1$ such that $u_n(t) = 0$ for all $t \geq 0$ and $n \geq N$, and assume that $\theta_n(0) = 0$ for all $n > N$; the physically relevant case is $N = n_b$, as identified in sections~\ref{sec:Results_Local} and~\ref{sec:Results_Global}.


\begin{assumption}\label{assump:Batchelor}
In the enstrophy-constrained optimal mixing strategy with diffusion
($\kappa > 0$), there exists an integer $N \geq 1$ such that
\[
    u_n(t) = 0 \qquad \text{for all } t \geq 0 \text{ and all } n \geq N.
\]
A particular case, supported by our analytical results and numerical observations in section~\ref{sec:Results}, is $N = n_b$, where
\[
    n_b
    = \left\lceil\frac{1}{2}\log_\sigma\!\left(
      \frac{2\Gamma}{(\sigma^2-1)\kappa k_0^2}
      \right)\right\rceil + 1
\]
is the Batchelor shell number; $n_b - 1$ is then the outermost shell on
which the optimal velocity may be nonzero, and we refer to the $n_b$-th
shell as the \emph{Batchelor shell}.
\end{assumption}

\begin{remark}
If Assumption~\ref{assump:Batchelor} holds, then for the $(N+1)$-th shell we have
\begin{equation}
    \dot{\theta}_{N+1} = k_N u_N \theta_N
    - k_{N+1}u_{N+1}\theta_{N+2}
    - \kappa k_{N+1}^2\theta_{N+1}.
\end{equation}
Since $u_N(t) = 0$ and $u_{N+1}(t) = 0$, there is a unique solution:
\begin{equation*}
    \theta_{N+1}(t) = \theta_{N+1}(0)\,e^{-\kappa k_{N+1}^2 t}.
\end{equation*}
More generally, for all $n > N$, we have $\theta_n(t) = \theta_n(0)\,e^{-\kappa k_n^2 t}$.  If $\theta_n(0) = 0$ for all $n > N$, then $\theta_n(t)$ remains zero at these shells throughout the process; conversely, if the initial condition is non-zero, $\theta_n(t)$ decays exponentially to zero.  In the particular case $N = n_b$, the shells beyond the Batchelor shell are therefore either quiescent or decay exponentially to zero.
\end{remark}

\begin{fact}\label{fact:L2_Hminus1}
Given a sequence $(\theta_1(t),\cdots,\theta_n(t),\cdots)$, if $\theta_n(t) = 0$ for all $n > N$, then
\begin{equation}\label{eq:L2_Hminus1_bound}
    \|\theta(t)\|_{L^2}^2
    \leq k_{N-1}^2\|\theta(t)\|_{H^{-1}}^2
    + \left(1-\frac{1}{\sigma^2}\right)\theta_N^2,
\end{equation}
i.e., the bound \eqref{eq:L2_Hminus1_bound} holds with
$c_1 = k_{N-1}^2$ and $c_2 = 1 - \frac{1}{\sigma^2}$ for any $N \geq 1$.
In the particular case $N = n_b$, we have $c_1 = k_{n_c}^2$, since
$n_b - 1 = n_c$.
\end{fact}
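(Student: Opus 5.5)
We prove the bound by splitting the $L^2$ sum at the last active shell and comparing each part to the matching piece of the $H^{-1}$ norm. Since $\theta_n(t)=0$ for $n>N$, we have
\[
\|\theta(t)\|_{L^2}^2=\sum_{n=1}^{N-1}\theta_n^2+\theta_N^2,
\qquad
\|\theta(t)\|_{H^{-1}}^2=\sum_{n=1}^{N-1}\frac{\theta_n^2}{k_n^2}+\frac{\theta_N^2}{k_N^2}.
\]
Shells $n\le N-1$ will be controlled by monotonicity of the wavenumbers. Part of the $H^{-1}$ contribution of shell $N$ will then be recycled to reduce the coefficient in front of $\theta_N^2$.

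First, because $k_n=\sigma^n k_0$ is increasing in $n$, for $n\le N-1$ we have $k_n\le k_{N-1}$. Hence
\[
\theta_n^2=k_n^2\,\frac{\theta_n^2}{k_n^2}\le k_{N-1}^2\,\frac{\theta_n^2}{k_n^2},
\qquad n\le N-1.
\]
Summing over $n\le N-1$ gives
\[
\sum_{n<N}\theta_n^2\le k_{N-1}^2\sum_{n<N}\frac{\theta_n^2}{k_n^2}.
\]

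Second, the $N$-th shell. Since $k_{N-1}^2/k_N^2=\sigma^{-2}$, we can write
\[
\theta_N^2=k_{N-1}^2\,\frac{\theta_N^2}{k_N^2}+\Bigl(1-\frac{1}{\sigma^2}\Bigr)\theta_N^2 .
\]
The first term on the right is exactly $k_{N-1}^2$ times the $N$-th contribution to the $H^{-1}$ norm.

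Adding the two estimates yields
\[
\|\theta\|_{L^2}^2\le k_{N-1}^2\|\theta\|_{H^{-1}}^2+\bigl(1-\sigma^{-2}\bigr)\theta_N^2,
\]
with equality in the last shell. For $N=n_b$ we have $N-1=n_c$, so $c_1=k_{n_c}^2$.

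The only point needing care is the edge case $N=1$. There the first sum is empty, and $k_0$ is the formal value $\sigma^0k_0$. The identity $\theta_1^2=k_0^2\theta_1^2/k_1^2+(1-\sigma^{-2})\theta_1^2$ still holds, so the estimate is valid for all $N\ge1$.
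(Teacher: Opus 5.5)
Your proof is correct and follows essentially the same route as the paper: you bound the shells $n\le N-1$ using $k_n\le k_{N-1}$, then split $\theta_N^2 = k_{N-1}^2\theta_N^2/k_N^2 + (1-\sigma^{-2})\theta_N^2$ so that the last shell joins the $H^{-1}$ norm. Your explicit check of the edge case $N=1$, where $k_0=\sigma^0k_0$ is simply the base wavenumber, is a small addition that the paper leaves implicit.
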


\begin{proof}
By definition of the norms,
\begin{equation*}
    \|\theta(t)\|_{L^2}^2 = \sum_{n=1}^{N}\theta_n^2(t)
    \quad\text{and}\quad
    \|\theta(t)\|_{H^{-1}}^2 = \sum_{n=1}^{N}\frac{\theta_n^2(t)}{k_n^2}.
\end{equation*}
Since $k_n = k_0\sigma^n$ with $\sigma > 1$, we have $k_n \leq k_{N-1}$
for all $n \leq N-1$.
Therefore,
\begin{equation*}
    \begin{split}
        \|\theta(t)\|_{L^2}^2
        &\leq \sum_{n=1}^{N-1}
          \left(\frac{k_{N-1}^2}{k_n^2}\right)\theta_n^2(t)
          + \theta_N^2(t)\\
        &= k_{N-1}^2\sum_{n=1}^{N-1}\frac{\theta_n^2(t)}{k_n^2}
           + \theta_N^2(t)\\
        &= k_{N-1}^2\sum_{n=1}^{N}\frac{\theta_n^2(t)}{k_n^2}
           + \left(1 - \frac{1}{\sigma^2}\right)\theta_N^2(t)\\
        &= k_{N-1}^2\|\theta(t)\|_{H^{-1}}^2
           + \left(1 - \frac{1}{\sigma^2}\right)\theta_N^2(t).
    \end{split}
\end{equation*}
\end{proof}

\begin{remark}\label{rmk:k_n_bound}
Since $n_c = n_b - 1 = \left\lceil\frac{1}{2}\log_\sigma
\frac{2\Gamma}{(\sigma^2-1)\kappa k_0^2}\right\rceil$ is the smallest
integer greater than or equal to the given value, we have
\begin{equation*}
    \frac{1}{2}\log_\sigma\frac{2\Gamma}{(\sigma^2-1)\kappa k_0^2}
    \leq n_c \leq
    \frac{1}{2}\log_\sigma\frac{2\Gamma}{(\sigma^2-1)\kappa k_0^2} + 1.
\end{equation*}
Since $k_{n_c}^2 = k_0^2\sigma^{2n_c}$, we obtain
\begin{equation}\label{eq:knc}
    \frac{2\Gamma}{(\sigma^2-1)\kappa}
    \leq k_{n_c}^2 \leq
    \frac{2\Gamma\sigma^2}{(\sigma^2-1)\kappa}.
\end{equation}
Since $k_{n_b}^2 = \sigma^2 k_{n_c}^2$, multiplying \eqref{eq:knc} through by $\sigma^2$ gives
\begin{equation}\label{eq:knb}
    \frac{2\Gamma\sigma^2}{(\sigma^2-1)\kappa}
    \leq k_{n_b}^2 \leq
    \frac{2\Gamma\sigma^4}{(\sigma^2-1)\kappa}.
\end{equation}
\end{remark}

\begin{theorem}[Lower bounds for the $H^{-1}$ norm]
\label{thm:Hminus1_lower_bound}
Let $k_n = k_0\sigma^n$ with $\sigma > 1$, and assume that
$\|\mathbf{u}(t)\|_{H^1} = \Gamma$ for all times under consideration.
\begin{enumerate}
\item If $\kappa = 0$, then
\begin{equation}\label{eq:lower_kappa0}
    \|\theta(t)\|_{H^{-1}}
    \geq
    \|\theta(0)\|_{H^{-1}}
    \exp\!\left[\Gamma\!\left(\frac{1}{\sigma}-\sigma\right)t\right].
\end{equation}

\item Suppose that $\kappa > 0$ and that, for some $N \geq 1$,
$\theta_n(t) = 0$ for all $n > N$.
Then
\begin{equation}\label{eq:lower_kappaN}
    \|\theta(t)\|_{H^{-1}}
    \geq
    \|\theta(0)\|_{H^{-1}}
    \exp\!\left[\left(
    \Gamma\!\left(\frac{1}{\sigma}-\sigma\right)
    - \kappa k_N^2
    \right)t\right].
\end{equation}

\item Suppose additionally that $N = n_b = n_c + 1$, and that there exist
$T \geq 0$ and a constant $q_T \geq 0$ such that
\[
    \theta_{n_b}^2(t) \leq q_T\theta_{n_c}^2(t), \qquad t \geq T.
\]
Then, for every $t \geq T$,
\[
    \|\theta(t)\|_{H^{-1}}
    \geq
    \|\theta(T)\|_{H^{-1}}
    \exp\!\left[\left(
    \Gamma\!\left(\frac{1}{\sigma}-\sigma\right)
    - \kappa k_{n_c}^2
    \frac{\sigma^2(1+q_T)}{\sigma^2+q_T}
    \right)(t-T)\right].
\]
Moreover, if $\displaystyle\lim_{T\to\infty}q_T = 0$, then for every
$t \geq T$,
\[
    \|\theta(t)\|_{H^{-1}}
    \geq
    \|\theta(T)\|_{H^{-1}}
    \exp\!\left[\left(
    \Gamma\!\left(\frac{1}{\sigma}-\sigma\right)
    - \kappa k_{n_c}^2 - r_T
    \right)(t-T)\right],
\]
where
\[
    r_T := \kappa k_{n_c}^2
    \frac{(\sigma^2-1)q_T}{\sigma^2+q_T}
    \longrightarrow 0
    \quad \text{as } T \to \infty.
\]
In particular, applying \eqref{eq:knc},
\begin{equation}\label{eq:lower_kappa_asym}
    \|\theta(t)\|_{H^{-1}}
    \geq
    \|\theta(T)\|_{H^{-1}}
    \exp\!\left[\left(
    \Gamma\!\left(\frac{1}{\sigma}-\sigma\right)
    - \frac{2\Gamma\sigma^2}{\sigma^2-1} + o_T(1)
    \right)(t-T)\right],
\end{equation}
where $o_T(1) \to 0$ as $T \to \infty$ and is nonpositive.
\end{enumerate}
\end{theorem}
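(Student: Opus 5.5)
The plan is to derive one differential inequality for $E(t):=\|\theta(t)\|_{H^{-1}}^2=\sum_n \theta_n^2/k_n^2$ and then close it by Gr\"onwall's lemma in each of the three cases. Only the bound on $\kappa\|\theta\|_{L^2}^2$ in terms of $E$ will change from case to case.

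\textbf{Step 1 (energy identity for the mix-norm).} Multiply Eq.~\eqref{eq:thetan} by $\theta_n/k_n^2$, sum over $n$, and shift the index in the term $k_{n-1}u_{n-1}\theta_{n-1}\theta_n/k_n^2$. Using $k_n/k_{n+1}^2=\sigma^{-2}/k_n$, this gives
\begin{equation*}
\frac{1}{2}\frac{d}{dt}E=-\Bigl(1-\frac{1}{\sigma^2}\Bigr)\sum_{n}\frac{u_n\theta_n\theta_{n+1}}{k_n}-\kappa\|\theta\|_{L^2}^2 .
\end{equation*}

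\textbf{Step 2 (advective term).} Set $a_n=|\theta_n|/k_n$, so that $|\theta_n\theta_{n+1}|/k_n=\sigma k_n a_na_{n+1}$. By Cauchy--Schwarz and the enstrophy constraint \eqref{eq:enstrophy},
\begin{equation*}
\Bigl|\sum_n \frac{u_n\theta_n\theta_{n+1}}{k_n}\Bigr|\le \sigma\,\Gamma\Bigl(\sum_n a_n^2a_{n+1}^2\Bigr)^{1/2}.
\end{equation*}
Next split $\sum_n a_n^2a_{n+1}^2\le\bigl(\sum_n a_n^2a_{n+1}\bigr)^2$ into even- and odd-indexed parts. This bounds it by $\bigl(\sum_{\rm even}a_n^2\bigr)\bigl(\sum_{\rm odd}a_n^2\bigr)$, hence by $E^2/4$ (AM--GM). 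Substituting into Step 1 gives
\begin{equation*}
\frac{d}{dt}E\ge -\Gamma\Bigl(\sigma-\frac1\sigma\Bigr)E-2\kappa\|\theta\|_{L^2}^2 .
\end{equation*}
This is at least as strong as the rate in \eqref{eq:lower_kappa0}: even the cruder bound $\sum_n a_na_{n+1}\le E$ already suffices for the stated exponent. I will keep whichever constant makes the bookkeeping match the statement exactly. This constant-tracking is the only place where care is needed.

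\textbf{Step 3 (the three cases).}

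\emph{Case $\kappa=0$.} Gr\"onwall applied to the inequality of Step 2 gives \eqref{eq:lower_kappa0} directly.

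\emph{Case 2.} Since $\theta_n=0$ for $n>N$ and $k_n\le k_N$, we have $\|\theta\|_{L^2}^2\le k_N^2E$. Gr\"onwall then yields \eqref{eq:lower_kappaN}.

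\emph{Case 3.} I would sharpen Fact~\ref{fact:L2_Hminus1}. Write $E=A+y/\sigma^2$, where $A=\sum_{n\le n_c}\theta_n^2/k_n^2\ge\theta_{n_c}^2/k_{n_c}^2$ and $y=\theta_{n_b}^2/k_{n_c}^2$. Then $\|\theta\|_{L^2}^2\le k_{n_c}^2(A+y)$, and the hypothesis gives $y\le q_TA$ for $t\ge T$. Since $y\mapsto (A+y)/(A+y/\sigma^2)$ is increasing, it follows that
\begin{equation*}
\|\theta\|_{L^2}^2\le k_{n_c}^2\frac{\sigma^2(1+q_T)}{\sigma^2+q_T}\,E .
\end{equation*}
Gr\"onwall on $[T,t]$ gives the first claim. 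The second claim is the algebraic identity
\begin{equation*}
\frac{\sigma^2(1+q)}{\sigma^2+q}=1+\frac{(\sigma^2-1)q}{\sigma^2+q},
\end{equation*}
which identifies $r_T$ and shows $r_T\to0$ when $q_T\to0$. Finally, \eqref{eq:lower_kappa_asym} follows from the upper bound in \eqref{eq:knc}, $\kappa k_{n_c}^2\le 2\Gamma\sigma^2/(\sigma^2-1)$. The gap between the two sides, together with $-r_T$, is absorbed into the nonpositive $o_T(1)$.

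\textbf{Main obstacle.} The delicate step is the advective estimate in Step 2. There one must pair the factor $\sigma$ coming from $k_{n+1}/k_n$ with the product $a_na_{n+1}$ so that the resulting constant is no worse than $\Gamma(\sigma-1/\sigma)$ for $E$, which is what the stated rate requires. If the even/odd splitting turns out to give the wrong constant, the fallback is the pointwise bound $a_na_{n+1}\le\frac12(a_n^2+a_{n+1}^2)$ summed over $n$.
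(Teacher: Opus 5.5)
Your proof is correct. It follows the paper's skeleton: the same $H^{-1}$ identity, Gr\"onwall, the bound $\|\theta\|_{L^2}^2\le k_N^2E$ for the truncated system, and a sharpened Fact~\ref{fact:L2_Hminus1} in case 3. The difference is that your advective estimate is genuinely sharper.

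\textbf{Comparison with the paper.} The paper applies Cauchy--Schwarz and then $\sup_n|\theta_{n+1}|/k_{n+1}\le\|\theta\|_{H^{-1}}$, bounding the advective sum by $\Gamma E$. That is the same loss as your ``cruder'' fallback, and it yields exactly the stated exponents.

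Your even/odd splitting instead gives $\sum_n a_n^2a_{n+1}^2\le\frac14E^2$. Hence $E'\ge-\Gamma(\sigma-\sigma^{-1})E-2\kappa\|\theta\|_{L^2}^2$, so the advective part of every exponent is halved and all three stated bounds follow a fortiori.

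The constant $\frac14$ is optimal. Equality holds for two adjacent shells with $|\theta_{n+1}|=\sigma|\theta_n|$ and the velocity concentrated on shell $n$. The improved rates at $\sigma=\sqrt2$, $\Gamma=1$ are about $0.354$ for $\kappa=0$ and about $4.354$ for $\kappa>0$. These are still consistent with the computed rates $0.3045$ and $3.626$.

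In case 3, your monotone-ratio argument is equivalent to the paper's two-step route. The paper first shows $\theta_{n_b}^2\le q_Tk_{n_b}^2X/(\sigma^2+q_T)$ and then inserts this into Fact~\ref{fact:L2_Hminus1}. Both give the same factor $\sigma^2(1+q_T)/(\sigma^2+q_T)$.

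\textbf{Two slips to fix.}

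First, your displayed intermediate inequality $\sum_n a_n^2a_{n+1}^2\le(\sum_n a_n^2a_{n+1})^2$ is false. Take $a_1=a_2=\epsilon<1$ and all other $a_n=0$; then it reads $\epsilon^4\le\epsilon^6$. It is also unnecessary. Each term $a_n^2a_{n+1}^2$ is the product of one even-indexed and one odd-indexed square, and distinct $n$ give distinct pairs. So the sum is directly at most $(\sum_{\rm even}a_n^2)(\sum_{\rm odd}a_n^2)\le E^2/4$.

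Second, when deriving \eqref{eq:lower_kappa_asym}, do not ``absorb'' the gap $\frac{2\Gamma\sigma^2}{\sigma^2-1}-\kappa k_{n_c}^2$ into $o_T(1)$. That gap is nonnegative and independent of $T$, so absorbing it would destroy both the sign of $o_T(1)$ and its vanishing as $T\to\infty$. The same applies to the extra slack your sharper constant creates. Simply discard both, since the exponential is monotone and $t-T\ge0$, and take $o_T(1)=-r_T$.
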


\begin{proof}
Set
\[
    X(t) := \|\theta(t)\|_{H^{-1}}^2.
\]
Using the shell equations and the relation $k_{n+1} = \sigma k_n$,
we obtain
\begin{align}
    \frac{1}{2}X'(t)
    &= \left(\frac{1}{\sigma^2}-1\right)
       \sum_{n}\frac{u_n\theta_n\theta_{n+1}}{k_n}
       - \kappa\|\theta(t)\|_{L^2}^2 \nonumber\\
    &= \left(\frac{1}{\sigma}-\sigma\right)
       \sum_n
       \left(\frac{\theta_n}{k_n}\right)
       \!\left(\frac{\theta_{n+1}}{k_{n+1}}\right)
       k_nu_n
       - \kappa\|\theta(t)\|_{L^2}^2.
    \label{eq:Hminus_identity}
\end{align}
Define
\[
    S(t) := \sum_n
    \left(\frac{\theta_n}{k_n}\right)
    \!\left(\frac{\theta_{n+1}}{k_{n+1}}\right)
    k_nu_n.
\]
By the Cauchy--Schwarz inequality,
\begin{align*}
    |S(t)|
    &\leq
    \left(\sum_n\frac{\theta_n^2}{k_n^2}\right)^{\!1/2}
    \!\!\left(\sum_n\frac{\theta_{n+1}^2}{k_{n+1}^2}k_n^2u_n^2
    \right)^{\!1/2}\\
    &\leq
    \|\theta(t)\|_{H^{-1}}
    \left(\sup_n\frac{|\theta_{n+1}(t)|}{k_{n+1}}\right)
    \!\left(\sum_n k_n^2u_n^2(t)\right)^{\!1/2}\\
    &\leq \Gamma\|\theta(t)\|_{H^{-1}}^2.
\end{align*}
Since $\frac{1}{\sigma} - \sigma < 0$, it follows that
\[
    \left(\frac{1}{\sigma}-\sigma\right)S(t)
    \geq
    \Gamma\!\left(\frac{1}{\sigma}-\sigma\right)\|\theta(t)\|_{H^{-1}}^2.
\]
Therefore, \eqref{eq:Hminus_identity} implies
\begin{equation}\label{eq:basic_Hminus_ineq}
    \frac{1}{2}X'(t)
    \geq
    \Gamma\!\left(\frac{1}{\sigma}-\sigma\right)X(t)
    - \kappa\|\theta(t)\|_{L^2}^2.
\end{equation}

If $\kappa = 0$, integrating \eqref{eq:basic_Hminus_ineq} and taking
square roots immediately gives \eqref{eq:lower_kappa0}.

Now suppose $\kappa > 0$ and $\theta_n(t) = 0$ for all $n > N$.
Then
\[
    \|\theta(t)\|_{L^2}^2
    = \sum_{n=1}^{N}\theta_n^2(t)
    \leq k_N^2\sum_{n=1}^{N}\frac{\theta_n^2(t)}{k_n^2}
    = k_N^2X(t).
\]
Substituting into \eqref{eq:basic_Hminus_ineq} and integrating gives
\eqref{eq:lower_kappaN}.

Finally, suppose $N = n_b = n_c + 1$ and
$\theta_{n_b}^2(t) \leq q_T\theta_{n_c}^2(t)$ for $t \geq T$.
If $q_T = 0$, then $\theta_{n_b}(t) = 0$ for $t \geq T$, and the
estimate reduces to \eqref{eq:lower_kappaN} with $N = n_b - 1 = n_c$.
Assume $q_T > 0$.
By the definition of the $H^{-1}$ norm,
\begin{align*}
    X(t)
    &\geq \frac{\theta_{n_c}^2(t)}{k_{n_c}^2}
           + \frac{\theta_{n_b}^2(t)}{k_{n_b}^2}
    \geq \frac{\theta_{n_b}^2(t)}{q_Tk_{n_c}^2}
           + \frac{\theta_{n_b}^2(t)}{k_{n_b}^2}.
\end{align*}
Since $k_{n_b} = \sigma k_{n_c}$, we have
$\frac{1}{q_Tk_{n_c}^2} = \frac{\sigma^2}{q_Tk_{n_b}^2}$,
and hence
\begin{equation}\label{eq:highest_shell_control}
    \theta_{n_b}^2(t)
    \leq \frac{q_Tk_{n_b}^2}{\sigma^2+q_T}\,X(t).
\end{equation}
On the other hand, by Fact~\ref{fact:L2_Hminus1},
\[
    \|\theta(t)\|_{L^2}^2
    \leq k_{n_c}^2X(t)
    + \left(1-\frac{1}{\sigma^2}\right)\theta_{n_b}^2(t).
\]
Substituting \eqref{eq:highest_shell_control} and using
$k_{n_b}^2 = \sigma^2k_{n_c}^2$, we obtain
\begin{align*}
    \|\theta(t)\|_{L^2}^2
    &\leq \left[k_{n_c}^2
      + \left(1-\frac{1}{\sigma^2}\right)
      \frac{q_Tk_{n_b}^2}{\sigma^2+q_T}\right]X(t)\\
    &= k_{n_c}^2\left[1+\frac{(\sigma^2-1)q_T}{\sigma^2+q_T}\right]X(t)\\
    &= k_{n_c}^2\frac{\sigma^2(1+q_T)}{\sigma^2+q_T}\,X(t).
\end{align*}
Substituting into \eqref{eq:basic_Hminus_ineq} for $t \geq T$,
integrating from $T$ to $t$, and taking square roots yields
\begin{align*}
    \|\theta(t)\|_{H^{-1}}
    &\geq
    \|\theta(T)\|_{H^{-1}}
    \exp\!\left[\left(
    \Gamma\!\left(\frac{1}{\sigma}-\sigma\right)
    - \kappa k_{n_c}^2\frac{\sigma^2(1+q_T)}{\sigma^2+q_T}
    \right)(t-T)\right]\\
    &=
    \|\theta(T)\|_{H^{-1}}
    \exp\!\left[\left(
    \Gamma\!\left(\frac{1}{\sigma}-\sigma\right)
    - \kappa k_{n_c}^2 - r_T
    \right)(t-T)\right]\\
    &\geq
    \|\theta(T)\|_{H^{-1}}
    \exp\!\left[\left(
    \Gamma\!\left(\frac{1}{\sigma}-\sigma\right)
    - \frac{2\Gamma\sigma^2}{\sigma^2-1} + o_T(1)
    \right)(t-T)\right],
\end{align*}
where the last inequality applies the upper bound
$\kappa k_{n_c}^2 \leq \frac{2\Gamma\sigma^2}{\sigma^2-1}$
from~\eqref{eq:knc}, and $o_T(1) = -r_T \to 0$ as $T\to\infty$.
\end{proof}

At $\kappa = 0$, we recover the same lower bound \eqref{eq:lower_kappa0} as that obtained by Miles and Doering~\citep{Miles2018}.  When $\kappa > 0$, the numerical results in Fig.~\ref{fig:globaltheta} support the asymptotic highest-shell depletion assumption $\displaystyle\lim_{T\to\infty}q_T = 0$.  We emphasize that this vanishing of $q_T$ has been verified numerically only for the global-in-time optimization.  For the local-in-time strategy, one instead finds that $q_T \to 1$ as $T \to \infty$.  Since the local-in-time strategy is not expected to achieve faster mixing than the globally optimized strategy, the same lower bound should remain valid in the local-in-time setting; however, the present argument applies only to the global-in-time problem.  It then follows from \eqref{eq:lower_kappa_asym} that as $T \to \infty$, the lower bound
\begin{equation}\label{eq:lower_kappa_final}
    \|\theta(t)\|_{H^{-1}} \gtrsim
    \|\theta(T)\|_{H^{-1}}
    \exp\!\left[\left(
    \Gamma\!\left(\frac{1}{\sigma}-\sigma\right)
    - \frac{2\Gamma\sigma^2}{\sigma^2-1}
    \right)(t-T)\right]
\end{equation}
holds with a decay-rate coefficient that is strictly independent of $\kappa$.

Table~\ref{tab:Exponents} compares the exponential decay rates in the lower bounds derived by Miles and Doering~\citep{Miles2018} and in the present study for different values of $\sigma$ and $\kappa > 0$.  As shown in Eq.~\eqref{eq:lower_kappa_final} and Table~\ref{tab:Exponents}, our lower bound is strictly independent of $\kappa$ for any $\kappa > 0$, consistent with the numerical results presented in Figs.~\ref{fig:localnorm} and~\ref{fig:globalnorms}.  In contrast, the lower bound of Miles and Doering~\citep{Miles2018} retains an explicit $\kappa$-dependence for sufficiently large $\sigma$.  We note, however, that our bound is not uniformly sharper: as shown in Table~\ref{tab:Exponents}($b$), their bound is slightly larger than ours at $\sigma = \sqrt{2}$ and $\Gamma = 1$.

For the parameter values used in our numerical computations ($\sigma = \sqrt{2}$, $\Gamma = 1$), the conditional lower bounds read $\|\theta\|_{H^{-1}} \gtrsim e^{-t/\sqrt{2}} \approx e^{-0.7071t}$
for $\kappa = 0$, and $\|\theta\|_{H^{-1}} \gtrsim e^{-(1/\sqrt{2}+4)t} \approx e^{-4.7071t}$ for $\kappa > 0$.  The numerical results reported in section~\ref{sec:Results}---i.e., $\|\theta\|_{H^{-1}} \sim e^{-0.3045T}$ for $\kappa = 0$ and $\|\theta\|_{H^{-1}} \sim e^{-3.626T}$ for $\kappa > 0$, obtained under the global-in-time strategy---are fully consistent with both the bounds derived in the present study and those of Miles and Doering~\citep{Miles2018}.
 
 \begin{table}[t!]
\centering
\begin{subtable}[t]{0.96\textwidth}
\centering
\caption*{\normalsize($a$) $\sigma = 2,~\Gamma = 1$}
\label{tab:sigma2Gamma1}
     \begin{tabular}{|c|c|c|c|}
      \hline 
      $\kappa$ & $n_b$  & Miles and Doering~\citep{Miles2018} & Present\\
      \hline 
       $0.0005$ & $8$ & $-8.2328$  & $-4.1667$  \\
        \hline 
       $0.001$ & $7$ & $-4.1779$  & $-4.1667$\\
        \hline 
        $0.002$ & $7$ & $-8.2328$  & $-4.1667$\\
         \hline 
         $0.004$ & $6$ & $-4.1779$  & $-4.1667$\\
          \hline 
          $0.008$ & $6$
          & $-8.2328$  & $-4.1667$\\
           \hline 
     \end{tabular}
\end{subtable}
        \hfill
\begin{subtable}[t]{0.96\textwidth}
\centering
\caption*{\normalsize($b$) $\sigma = \sqrt{2},~\Gamma = 1$}
\label{tab:sigmaroot2Gamma1}
     \begin{tabular}{|c|c|c|c|}
      \hline 
      $\kappa$ & $n_b$  &  Miles and Doering~\citep{Miles2018} & Present\\
      \hline 
       $0.0005$ & $14$ & $-4.3567$  & $-4.7071$  \\
        \hline 
       $0.001$ & $13$ & $-4.3567$  & $-4.7071$ \\
        \hline 
        $0.002$ & $12$ & $-4.3567$  & $-4.7071$  \\
         \hline 
         $0.004$ & $11$ &$-4.3567$  & $-4.7071$ \\
          \hline 
          $0.008$ & $10$ &
         $-4.3567$  & $-4.7071$ \\
           \hline 
     \end{tabular}
\end{subtable}
\caption{Comparison of the exponential decay rate in the lower bound between Miles and Doering~\citep{Miles2018} and the present study.
}
\label{tab:Exponents}
\end{table}

\section{Conditional upper bounds}
\label{sec:UpperBound}

In this section, we derive conditional bounds on the maximal rates of enhanced dissipation, complementing the conditional lower bounds on the mix-norm derived in Theorem~\ref{thm:Hminus1_lower_bound}.

We first derive an estimate on the propagation of regularity for the shell model. 
\begin{lemma}
\label{L1}
It holds that
\[
\|\theta(t)\|_{H^1}^2 + 2\kappa \int_0^t \|\theta(s)\|_{H^2}^2\, \emph{d} s \le \exp\left(2(\sigma-\sigma^{-1})\int_0^t \|\u\|_{H^1}\, \emph{d}s\right) \|\theta_0\|_{H^1}^2
\]
for any $t\ge 0$.
\end{lemma}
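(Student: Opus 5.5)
Set $Y(t) := \|\theta(t)\|_{H^1}^2 = \sum_n k_n^2\theta_n^2$. We will derive a differential inequality of Grönwall type, mirroring the computation of $X'(t)$ in the proof of Theorem~\ref{thm:Hminus1_lower_bound}, but now with weight $k_n^2$ instead of $k_n^{-2}$.

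\textbf{Step 1: the identity for $Y'$.} Multiply \eqref{eq:thetan} by $k_n^2\theta_n$ and sum over $n$. The diffusive term contributes $-\kappa\sum_n k_n^4\theta_n^2 = -\kappa\|\theta\|_{H^2}^2$. The transport terms are $\sum_n k_n^2 k_{n-1}u_{n-1}\theta_{n-1}\theta_n - \sum_n k_n^3 u_n\theta_n\theta_{n+1}$. Shifting the index in the first sum turns it into $\sum_n k_{n+1}^2 k_n u_n\theta_n\theta_{n+1}$, and $\theta_0=u_0=0$ kills boundary terms. Using $k_{n+1}=\sigma k_n$, we obtain
\[
\tfrac12 Y'(t) = (\sigma^2-1)\sum_n k_n^3 u_n\theta_n\theta_{n+1} - \kappa\|\theta\|_{H^2}^2 .
\]
This can be rewritten as
\[
\tfrac12 Y'(t) = (\sigma-\sigma^{-1})\sum_n (k_n\theta_n)(k_{n+1}\theta_{n+1})(k_nu_n) - \kappa\|\theta\|_{H^2}^2 .
\]
The algebra checks: $k_n^3 = k_n\cdot k_{n+1}\sigma^{-1}\cdot k_n$, and $(\sigma^2-1)\sigma^{-1}=\sigma-\sigma^{-1}$.

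\textbf{Step 2: bound the trilinear sum.} Exactly as for $S(t)$ in the previous proof, Cauchy--Schwarz followed by a sup bound gives
\[
\Bigl|\sum_n (k_n\theta_n)(k_{n+1}\theta_{n+1})(k_nu_n)\Bigr| \le \Bigl(\sum_n k_n^2\theta_n^2\Bigr)^{1/2}\sup_n|k_{n+1}\theta_{n+1}|\Bigl(\sum_n k_n^2u_n^2\Bigr)^{1/2} \le \|\u\|_{H^1}\,Y .
\]
Here we use $\sup_n |k_{n+1}\theta_{n+1}|\le Y^{1/2}$. Hence
\[
Y'(t) + 2\kappa\|\theta\|_{H^2}^2 \le 2(\sigma-\sigma^{-1})\|\u(t)\|_{H^1}\,Y(t).
\]

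\textbf{Step 3: integrate.} Set $a(t)=2(\sigma-\sigma^{-1})\|\u(t)\|_{H^1}$ and $A(t)=\int_0^t a$. Multiply by $e^{-A(t)}$ and integrate from $0$ to $t$:
\[
e^{-A(t)}Y(t) + 2\kappa\int_0^t e^{-A(s)}\|\theta(s)\|_{H^2}^2\,ds \le Y(0).
\]
Since $A$ is nondecreasing, $e^{-A(s)}\ge e^{-A(t)}$ for $s\le t$. Multiplying through by $e^{A(t)}$ then gives the claimed estimate, with $\theta_0=\theta(0)$.

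\textbf{Main obstacle.} The only delicate point is justification rather than algebra. The shell system is infinite-dimensional, so termwise differentiation of $Y$ and the index shift need $\sum k_n^4\theta_n^2$ and $\sum k_n^2u_n^2$ to be finite. I would handle this by performing the computation for the Galerkin truncation to shells $n\le M$. The same identity holds there, with a boundary term that vanishes if we also set $u_M=0$; alternatively, use smooth (rapidly decaying) solutions. The resulting estimate is uniform in $M$, and we then pass to the limit $M\to\infty$ using lower semicontinuity of the norms.
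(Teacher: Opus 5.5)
Your proof is correct and is essentially the paper's argument: the same $H^1$ balance $\tfrac12\frac{d}{dt}\|\theta\|_{H^1}^2+\kappa\|\theta\|_{H^2}^2=(\sigma-\sigma^{-1})\sum_n (k_n\theta_n)(k_{n+1}\theta_{n+1})(k_nu_n)$, a Cauchy--Schwarz bound of the trilinear sum by $\|\mathbf{u}\|_{H^1}\|\theta\|_{H^1}^2$, and a Gronwall argument. The differences are cosmetic: the paper puts the supremum on $k_n|u_n|\le\|\mathbf{u}\|_{H^1}$ rather than on $|k_{n+1}\theta_{n+1}|$, and it leaves implicit both the integrating-factor step and the Galerkin-truncation justification that you spell out.
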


The proof is actually an adaption of the bound on the $H^{-1}$ norm found in Miles and Doering~\citep{Miles2018}.
\begin{proof}
We compute the rate of change of the (homogeneous) $H^1$ norm under the shell model evolution \eqref{eq:thetan},
\begin{align*}
\frac12\frac{\text{d}}{\text{d}t} \|\theta\|_{H^1}^2 & = \sum_{n\ge 1} k_n^2 \theta_n\dot \theta_n\\
& = \sum_{n\ge1} k_n^2 \theta_n\left(k_{n-1}u_{n-1} \theta_{n-1} - k_n u_n \theta_{n+1}\right) - \kappa \sum_{n\ge1} k_n^4 \theta_n^2.
\end{align*}
We use $\theta_0=0$ and the definition $k_n = \sigma^n k_0$ to rewrite the advection term,
\begin{align*}
\frac12\frac{\text{d}}{\text{d}t} \|\theta\|_{H^1}^2 + \kappa \|\theta\|_{H^2}^2  & =  \sum_{n\ge1} k_n \left(k_{n+1}^2-k_n^2\right)\theta_n \theta_{n+1} k_n u_n\\
& = \left(\sigma -\sigma^{-1}\right) \sum_{n\ge 1} k_nk_{n+1}\theta_n\theta_{n+1}k_n u_n.
\end{align*}
With the help of the Cauchy--Schwarz inequality and the trivial mode-wise bound $k_n |u_n|\le \|\u\|_{H^1}$, we furthermore estimate
\begin{align*}
\frac12\frac{\text{d}}{\text{d}t} \|\theta\|_{H^1}^2 + \kappa \|\theta\|_{H^2}^2  \le \left(\sigma -\sigma^{-1}\right) \| \u\|_{H^1} \|\theta\|_{H^1}^2.
\end{align*}
We invoke a Gronwall-type argument to deduce the statement of the lemma.
\end{proof}

In the next step, we convert the bound on the regularity into an estimate on the dissipation.

\begin{lemma}
\label{L2}
It holds that
\[
\kappa \int_0^t \|\theta\|_{H^1}^2\, \emph{d}t\le \sqrt{\frac{\kappa t}2} \exp\left((\sigma-\sigma^{-1}) \int_0^t \|\u\|_{H^1}\, \emph{d}t\right)\|\theta_0\|_{H^1}\|\theta_0\|_{L^2},
\]
for any $t\ge 0$.
\end{lemma}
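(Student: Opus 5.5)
The plan is to interpolate the $H^1$ norm between $L^2$ and $H^2$, and then control the two time integrals that appear using the energy identity \eqref{eq:energy_identity} and Lemma~\ref{L1}.

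First, for each time $s$, the Cauchy--Schwarz inequality applied mode-wise to the shell norm \eqref{eq:shell_norm} gives
\[
\|\theta\|_{H^1}^2=\sum_n k_n^2\theta_n^2=\sum_n (k_n^2\theta_n)\theta_n\le \|\theta\|_{H^2}\|\theta\|_{L^2}.
\]
By the energy identity \eqref{eq:energy_identity}, the $L^2$ norm is nonincreasing, so $\|\theta(s)\|_{L^2}\le\|\theta_0\|_{L^2}$ for all $s$. Integrating in time and applying Cauchy--Schwarz on $[0,t]$ then yields
\[
\kappa\int_0^t\|\theta\|_{H^1}^2\,\text{d}s\le \kappa\|\theta_0\|_{L^2}\int_0^t\|\theta\|_{H^2}\,\text{d}s\le \kappa\|\theta_0\|_{L^2}\sqrt{t}\left(\int_0^t\|\theta\|_{H^2}^2\,\text{d}s\right)^{1/2}.
\]

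Second, Lemma~\ref{L1} bounds the dissipation term. Dropping the nonnegative $\|\theta(t)\|_{H^1}^2$ on its left-hand side gives
\[
2\kappa\int_0^t\|\theta\|_{H^2}^2\,\text{d}s\le \exp\left(2(\sigma-\sigma^{-1})\int_0^t\|\u\|_{H^1}\,\text{d}s\right)\|\theta_0\|_{H^1}^2.
\]
Hence
\[
\left(\int_0^t\|\theta\|_{H^2}^2\,\text{d}s\right)^{1/2}\le (2\kappa)^{-1/2}\exp\left((\sigma-\sigma^{-1})\int_0^t\|\u\|_{H^1}\,\text{d}s\right)\|\theta_0\|_{H^1}.
\]

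Inserting this into the first estimate, the prefactor becomes $\kappa\sqrt{t}\,(2\kappa)^{-1/2}=\sqrt{\kappa t/2}$, which is exactly the claimed bound. The argument involves no real obstacle; the only point needing care is to apply Cauchy--Schwarz to the time integral rather than pointwise in time. Doing so is what produces the $\sqrt{t}$ factor and lets the single power $\kappa$ combine with $(2\kappa)^{-1/2}$ to give $\sqrt{\kappa/2}$.
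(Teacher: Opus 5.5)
Your proposal is correct and follows the paper's proof step for step. The shared ingredients are the interpolation $\|\theta\|_{H^1}^2\le\|\theta\|_{L^2}\|\theta\|_{H^2}$, the bound $\|\theta(s)\|_{L^2}\le\|\theta_0\|_{L^2}$ from the energy identity, Cauchy--Schwarz in time (the paper calls it Jensen), and Lemma~\ref{L1} to control $\kappa\int_0^t\|\theta\|_{H^2}^2\,\text{d}s$, with the same bookkeeping $\kappa\sqrt{t}\,(2\kappa)^{-1/2}=\sqrt{\kappa t/2}$.
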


\begin{proof}
The assertion is an immediate consequence of the previous regularity lemma and an interpolation. Indeed, since $\|\theta\|_{H^1}^2\le \|\theta\|_{L^2}\|\theta\|_{H^2}$ as a consequence of the Cauchy--Schwarz inequality, we deduce from the energy balance \eqref{eq:energy_identity} that
\begin{align*}
    \kappa \int_0^t \|\theta\|_{H^1}^2\, \text{d}t & \le \kappa \|\theta_0 \|_{L^2}\int_0^t \|\theta\|_{H^2}\, \text{d}t.
\end{align*}
We now invoke Jensen's inequality in the time integral and Lemma \ref{L1} to estimate
\begin{align*}
     \kappa \int_0^t \|\theta\|_{H^1}^2\, \text{d}t & \le \sqrt{\kappa t} \|\theta_0 \|_{L^2} \left(\kappa \int_0^t \|\theta\|_{H^2}^2\, \text{d}t\right)^{1/2}\\
     &\le  \sqrt{\frac{\kappa t}2} \|\theta_0 \|_{L^2} \exp\left((\sigma-\sigma^{-1})\int_0^t \|\u\|_{H^1}\, \text{d}s\right) \|\theta_0\|_{H^1}.
\end{align*}
\end{proof}

The following estimate gives  a bound on the maximal rate of enhanced dissipation that is possible in the  diffusive shell model under the assumption that the vector field $\u$ has a fixed enstrophy budget, see Eq.~\eqref{eq:enstrophy}.
\begin{theorem}
\label{T1}
Let $D>0$ be such that
\begin{equation}\label{2}
\|\theta(t)\|_{L^2} \le e^{-D t}\|\theta_0\|_{L^2},
\end{equation}
for any $t\ge0$, and let $\gamma_0=\|\theta_0\|_{H^1}/\|\theta_0\|_{L^2}$ denote the initial frequency scale. Then
\begin{equation}\label{3}
D \le 2\frac{(\sigma-\sigma^{-1})\Gamma}{\log\frac1{\kappa}},
\end{equation}
for any $\kappa\le \kappa_0$  for some sufficiently small $\kappa_0$ dependent on $\gamma_0$, $\sigma$ and $\Gamma$.
\end{theorem}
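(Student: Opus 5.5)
Combine the energy identity \eqref{eq:energy_identity} with Lemma~\ref{L2} and the assumed exponential decay \eqref{2}, all evaluated at one well-chosen time $t$. From the energy identity,
\[
\|\theta_0\|_{L^2}^2 - \|\theta(t)\|_{L^2}^2 = 2\kappa\int_0^t\|\theta\|_{H^1}^2\,\text{d}s .
\]
By \eqref{2} the left-hand side is at least $(1-e^{-2Dt})\|\theta_0\|_{L^2}^2$. Lemma~\ref{L2}, with $\|\u\|_{H^1}=\Gamma$, bounds the right-hand side by $\sqrt{2\kappa t}\,e^{(\sigma-\sigma^{-1})\Gamma t}\,\gamma_0\|\theta_0\|_{L^2}^2$. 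Dividing by $\|\theta_0\|_{L^2}^2$ gives, for every $t\ge0$,
\[
1-e^{-2Dt}\le \gamma_0\sqrt{2\kappa t}\;e^{(\sigma-\sigma^{-1})\Gamma t}.
\]
This inequality says that enhanced dissipation cannot act before the stirring has amplified gradients enough to overcome the small factor $\sqrt{\kappa}$.

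Next I would choose $t$ just below the time at which the right-hand side becomes of order one. Concretely, set
\[
t_\ast = \frac{(1-\epsilon)\log\frac1\kappa}{2(\sigma-\sigma^{-1})\Gamma}
\]
for a small fixed $\epsilon>0$, or, to get exactly the constant $2$ claimed in \eqref{3}, with the factor $\tfrac12$ arranged as below. Then $e^{(\sigma-\sigma^{-1})\Gamma t_\ast}=\kappa^{-(1-\epsilon)/2}$, so the right-hand side equals $\gamma_0\sqrt{2t_\ast}\,\kappa^{\epsilon/2}$. Since $t_\ast$ grows only logarithmically, this is at most $\tfrac12$, or any prescribed small number, once $\kappa\le\kappa_0(\gamma_0,\sigma,\Gamma)$. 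Hence $e^{-2Dt_\ast}\ge\tfrac12$, that is, $D\le \log 2/(2t_\ast)$. Substituting $t_\ast$ gives $D\le C(\sigma-\sigma^{-1})\Gamma/\log\frac1\kappa$ with an explicit $C$.

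The main obstacle is bookkeeping the constant so that it comes out as $2$ and not something slightly larger, and making the $\epsilon$-loss and the $\sqrt{t}$ prefactor disappear into $\kappa_0$. I would take $t_\ast=\log(1/\kappa)/(4(\sigma-\sigma^{-1})\Gamma)$, i.e.\ $\epsilon=1/2$. Then the right-hand side is $\gamma_0\sqrt{2t_\ast}\,\kappa^{1/4}\le 1-e^{-1/2}$ for small $\kappa$. It would then follow that $2Dt_\ast\le 1/2$, hence $D\le 1/(4t_\ast)=(\sigma-\sigma^{-1})\Gamma/\log\frac1\kappa$, which is even better than \eqref{3}. The strategy is to argue by contradiction: if $D$ exceeded the bound, then $e^{-2Dt_\ast}$ would be small, contradicting the displayed inequality for $\kappa\le\kappa_0$. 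Any choice $t_\ast=c\log(1/\kappa)$ with $c<1/(2(\sigma-\sigma^{-1})\Gamma)$ works, provided the threshold on $1-e^{-2Dt_\ast}$ is chosen consistently, and the factor $2$ in \eqref{3} leaves room for this.
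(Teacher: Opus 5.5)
Your argument is correct. It follows the paper's skeleton up to the inequality $1-e^{-2Dt}\le\gamma_0\sqrt{2\kappa t}\,e^{(\sigma-\sigma^{-1})\Gamma t}$, which you obtain exactly as the paper does from \eqref{eq:energy_identity}, \eqref{2} and Lemma~\ref{L2}. The two proofs differ in the choice of test time.

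The paper ties $t$ to the unknown rate, taking the enhanced-dissipation time $t=\log 2/(2D)$. This gives the implicit relation $\log\frac1\kappa\le\log(4\log 2\,\gamma_0^2)+\log\frac1D+\log(2)(\sigma-\sigma^{-1})\Gamma/D$. Extracting \eqref{3} from it needs an extra asymptotic step: first $D\to0$ as $\kappa\to0$, then the $1/D$ term dominates.

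You tie $t$ to $\kappa$ instead, with $t_\ast=\log(1/\kappa)/(4(\sigma-\sigma^{-1})\Gamma)$. The right-hand side then becomes $\gamma_0\kappa^{1/4}\sqrt{\log(1/\kappa)/(2(\sigma-\sigma^{-1})\Gamma)}$, which does not involve $D$ at all. This gives an explicit $\kappa_0(\gamma_0,\sigma,\Gamma)$ and the sharper bound $D\le(\sigma-\sigma^{-1})\Gamma/\log\frac1\kappa$ in one line, with no transcendental inequality to invert. The contradiction framing you mention is unnecessary, since the direct chain already closes.

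The paper's choice buys a direct physical reading (the variance half-life is at least of order $\log\frac1\kappa$) and a template it reuses for Theorem~\ref{T2}. Your choice transfers there as well and gives an explicit exponent $\alpha$ of order $D/((\sigma-\sigma^{-1})\Gamma)$.

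The ``main obstacle'' you identify, getting the constant to come out as $2$, is not a real difficulty. Lowering the threshold in $1-e^{-2Dt_\ast}\le 1-e^{-1/2}$ makes the constant in front of $(\sigma-\sigma^{-1})\Gamma/\log\frac1\kappa$ as small as you like, at the cost of a smaller $\kappa_0$. In fact, inserting Lemma~\ref{L1} into the energy identity, dividing by $t$ and letting $t\downarrow0$ shows that \eqref{2} already forces $D\le\kappa\gamma_0^2$.
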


\begin{proof}
Our starting point is the energy balance \eqref{eq:energy_identity} in which we make use of the hypothesis \eqref{2} and the bound on the dissipation in Lemma \ref{L2} to the effect that
\[
\|\theta_0\|_{L^2}^2 \le e^{-2 Dt } \|\theta_0\|_{L^2} ^2+ \sqrt{2\kappa t} \exp\left((\sigma-\sigma^{-1}) \int_0^t \|\u\|_{H^1}\, \text{d}t\right)\|\theta_0\|_{L^2}\|\theta_0\|_{H^1}.
\]
Invoking the enstrophy budget assumption in \eqref{eq:enstrophy}, the latter can be rewritten as
\[
1 \le e^{-2Dt} + {\sqrt{2\kappa t}}{\gamma_0} \exp\left((\sigma-\sigma^{-1})t \Gamma\right),
\]
which holds true for any $t\ge 0$. The best bound on $D$ can be obtained by minimizing the right-hand side with respect to $t$. Since we are interested into the scaling behavior only, we are satisfied with giving a  rough bound by picking $t$ suitably. For instance, if we choose $t$ such that the first term on the right-hand side equals $1/2$, i.e.,
\[
t = \frac{\log 2}{2D},
\]
which is the enhanced dissipation time scale, we obtain that
\[
\frac1{\kappa}  \le {4\log 2}{\gamma_0^2} \frac1D  \exp\left(\log( 2) (\sigma-\sigma^{-1})\frac{\Gamma}D\right). 
\]
Applying the logarithm,  we thus have that
\[
\log \frac1{\kappa} \le \log \left({4\log2}{\gamma_0^2}\right) + \log \frac1D + \log( 2) (\sigma-\sigma^{-1}) \frac{\Gamma}D.
\]
If $\kappa $ is chosen small enough, say
\[
\kappa \le \frac1{16 (\log 2)^2\gamma_0^4},
\]
we can simplify the bound as
\[
\log \frac1{\sqrt{\kappa}} \le \log\frac1{D} + \log(2)(\sigma-\sigma^{-1})\frac{\Gamma}D.
\]
Because the left-hand side is diverging if $\kappa$ is getting small, also the dissipation rate $D=D(\kappa)$ has to decrease if $\kappa $ is getting small, $D(\kappa)\to 0$ as $\kappa\to0$. Therefore, the dominant term on the right-hand side is the second one, and we have that
\[
\log \frac1{\kappa} \le 2 (\sigma-\sigma^{-1})\frac{\Gamma}D
\] 
if $\kappa\le \kappa_0$ for some isufficiently small $\kappa_0$ dependent on $\gamma_0,\sigma$ and $\Gamma$. This proves the statement of the theorem.
\end{proof}

The postulated bound on the rate of enhanced dissipation \eqref{2} supposedly captures the optimal rate of decay of the $L^2$ norm in the early states of the mixing process as it is sharp for $t=0$. For later times, the exponential rate should be determined by the Batchelor wave number $O(1/\sqrt{\kappa})$, so that $D$ should become independently of $\kappa$. Instead, motivated by the numerical studies, there should be a diverging prefactor multiplying the exponential function. The intention of our second theorem is to give a rigorous bound on this prefactor.

\begin{theorem}
\label{T2}
Let $D,\Lambda>0$ be such that
\begin{equation}\label{5}
\|\theta(t)\|_{L^2} \le \Lambda e^{-D t}\|\theta_0\|_{L^2},
\end{equation}
for any $t\ge0$,    and let $\gamma_0=\|\theta_0\|_{H^1}/\|\theta_0\|_{L^2}$ denote the frequency length scale, and assume that $D$ is independent of $\kappa$. Then
\begin{equation}\label{6}
\Lambda \ge \frac1{\kappa^{\alpha}}
\end{equation}
for some $\alpha\in(0,1)$ and any $\kappa\le \kappa_0$, where $\alpha$ and $\kappa_0$ both dependend on $\gamma_0,\Gamma , \sigma$ and $D$.
\end{theorem}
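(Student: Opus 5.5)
The plan is to follow the proof of Theorem~\ref{T1} closely. The only change is that the prefactor $\Lambda$ now enters the energy balance. We start from the energy identity \eqref{eq:energy_identity} and insert the hypothesis \eqref{5}, the dissipation bound of Lemma~\ref{L2}, and the enstrophy constraint $\|\u\|_{H^1}=\Gamma$. After dividing by $\|\theta_0\|_{L^2}^2$, this gives for every $t\ge0$
\[
1\le \Lambda^2 e^{-2Dt} + \sqrt{2\kappa t}\,\gamma_0\, e^{(\sigma-\sigma^{-1})\Gamma t}.
\]
This inequality is the only input. The rest of the argument is a choice of $t$ that forces $\Lambda$ to be large when $\kappa$ is small.

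Unlike in Theorem~\ref{T1}, we may not pick $t$ so that $\Lambda^2e^{-2Dt}=1/2$, since that $t$ would depend on the unknown $\Lambda$. Instead we choose $t$ so that the diffusive term is at most $1/2$, and then read off a lower bound on $\Lambda$. Set $\beta:=(\sigma-\sigma^{-1})\Gamma$ and choose
\[
t_\ast=\frac{\delta}{\beta}\log\frac1\kappa,
\]
with a fixed $\delta\in(0,1/2)$, say $\delta=1/4$. Then $e^{\beta t_\ast}=\kappa^{-\delta}$, and the second term becomes
\[
\gamma_0\sqrt{2\delta\beta^{-1}\log(1/\kappa)}\;\kappa^{1/2-\delta}.
\]
This tends to $0$ as $\kappa\to0$, because a power of $\kappa$ beats $\sqrt{\log(1/\kappa)}$. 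Hence there is $\kappa_0=\kappa_0(\gamma_0,\sigma,\Gamma)$ such that this term is at most $1/2$ for all $\kappa\le\kappa_0$.

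For such $\kappa$ we get $\Lambda^2e^{-2Dt_\ast}\ge1/2$, that is,
\[
\Lambda\ge 2^{-1/2}e^{Dt_\ast}=2^{-1/2}\kappa^{-D\delta/\beta}.
\]
Now set $\alpha'=D\delta/\beta$. If $\alpha'\ge1$, first decrease $\delta$ so that $\alpha'<1$; this is possible because $D$ is independent of $\kappa$. Then absorb the constant $2^{-1/2}$ by passing to a slightly smaller exponent: take any $\alpha\in(0,\alpha')$ and shrink $\kappa_0$ until $2^{-1/2}\kappa^{-\alpha'}\ge\kappa^{-\alpha}$, which holds once $\kappa^{\alpha'-\alpha}\le2^{-1/2}$. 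This yields \eqref{6}, with $\alpha$ and $\kappa_0$ depending only on $\gamma_0,\Gamma,\sigma,D$.

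The main point needing care is this balancing step. The time $t_\ast$ must grow like $\log(1/\kappa)$, so that $e^{Dt_\ast}$ produces a negative power of $\kappa$. At the same time it must not grow so fast that $e^{\beta t_\ast}$ overwhelms the gain $\sqrt\kappa$ from Lemma~\ref{L2}; the requirement $\delta<1/2$ is exactly what makes both work. The assumption that $D$ does not depend on $\kappa$ is essential here, since otherwise $\alpha$ could degenerate as $\kappa\to0$.
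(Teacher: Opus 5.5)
Your proof is correct. It shares the paper's skeleton: the energy identity, Lemma~\ref{L2} and the enstrophy constraint give $1\le \Lambda^2e^{-2Dt}+\sqrt{2\kappa t}\,\gamma_0e^{(\sigma-\sigma^{-1})\Gamma t}$ for all $t\ge0$. Your power $\Lambda^2$ is the right one. The paper's display has $\Lambda$, a slip that rescales the attainable exponent by a factor of two but does not affect the qualitative statement.

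The difference is the choice of test time. The paper takes the enhanced-dissipation time, $t=\log(2\Lambda)/(2D)$ in its notation, so that the decay term equals $\tfrac12$. This leaves the implicit relation $\log\frac1\kappa\le\log\frac{4\gamma_0^2}{D}+\log\log(2\Lambda)+(\sigma-\sigma^{-1})\frac{\Gamma}{D}\log(2\Lambda)$. From there it still has to argue, and only sketches, that $\Lambda\to\infty$ and that the constant and $\log\log$ terms are lower order before any $\alpha$ can be extracted.

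Your choice $t_\ast=\frac{\delta}{\beta}\log\frac1\kappa$ instead removes the diffusive term through the slack $\kappa^{1/2-\delta}$. It yields $\Lambda\ge 2^{-1/2}\kappa^{-D\delta/\beta}$ directly, with no inversion and no separate divergence argument. Nothing is lost. Letting $\delta\uparrow\frac12$ you obtain every $\alpha<\min\{1,\,D/(2(\sigma-\sigma^{-1})\Gamma)\}$, which is the same range the paper's route gives once $\Lambda$ is corrected to $\Lambda^2$.

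One sentence in your write-up is mistaken, though harmlessly. You \emph{may} choose $t$ depending on $\Lambda$: the inequality holds for every $t\ge0$, and $\Lambda\ge1$ (take $t=0$ in the hypothesis), so that choice of $t$ is nonnegative and admissible. Its only drawback is the implicit inequality it produces.
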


\begin{proof}
The proof is very similar to that of the previous theorem. This time, we derive
\[
1\le \Lambda e^{-2Dt} + {\sqrt{2\kappa t}}{\gamma_0} \exp\left((\sigma-\sigma^{-1})t\Gamma \right)
\]
from the energy balance  \eqref{eq:energy_identity}, the dissipation  hypothesis \eqref{5}, the bound in Lemma \ref{L2} and the enstrophy budget assumption in Eq.~\eqref{eq:enstrophy}. Here, $\gamma_0$ is, as before, the frequency scale of the initial configuration. For
\[
t = \frac{\log(2\Lambda)}{2D},
\]
the first term on the right-hand side equals $1/2$, so that
\[
\frac1{\kappa} \le \frac{4\log(2\Lambda)\gamma_0^2 }{D} \exp\left((\sigma-\sigma^{-1}) \frac{\log(2\Lambda)\Gamma}{D}\right).
\]
We apply the logarithm to this inequality and find
\[
\log\frac1{\kappa} \le \log\frac{4\gamma_0^2 }{D} + \log\log (2\Lambda) +(\sigma-\sigma^{-1})\frac{\log(2\Lambda)\Gamma}{D}.
\]
It is readily checked that $\Lambda=\Lambda(\kappa)$ is necessarily diverging as $\kappa$ is small, which implies that
\[
\alpha  \log \frac1{\kappa} \le  \log \Lambda
\]
for some $\alpha\in(0,1)$, provided that $\kappa \le \kappa_0$  sufficiently small, where $\alpha$ and $\kappa_0$ depend on $\gamma_0, \Gamma, D$ and $\sigma$. This completes the proof.
\end{proof}

\section{Conclusions}
\label{sec:Conclusion}

To clarify how diffusion influences the ultimate efficiency of optimally constrained stirring, we exploit a shell model framework that makes it possible to probe the long-time behavior of optimal mixing in an advection-diffusion equation.  By dramatically reducing the complexity of the underlying dynamics while retaining the essential multiscale transfer mechanisms, the shell model allows systematic exploration of asymptotic regimes that are inaccessible to direct numerical simulations of the full PDEs.  Focusing on the decay of the scalar variance measured by the $H^{-1}$ norm and enforcing an enstrophy constraint on the stirring, we isolate the mechanisms that govern mixing efficiency in the asymptotic, long-time limit.

Our numerical results demonstrate that, when diffusion is present ($\kappa>0$), optimal stirring drives the scalar field toward a state in which the scalar length scale is arrested at a generalized Batchelor scale.  Once this regime is reached, the $H^{-1}$ mix-norm decays exponentially in time with a rate that is independent of the diffusivity.  This behavior is observed consistently under both local-in-time and global-in-time optimization strategies, indicating that the long-time mixing dynamics are robust with respect to the choice of optimization protocol; moreover, the global-in-time strategy achieves a strictly higher exponential decay rate than the local-in-time strategy for both $\kappa=0$ and $\kappa>0$.

A key finding of this study is that even arbitrarily small diffusion fundamentally alters the long-time asymptotic mixing behavior compared with the purely advective case ($\kappa=0$). While the no-diffusion system exhibits a slower decay of the $H^{-1}$ norm, the introduction of any positive diffusivity leads to a qualitatively different asymptotic regime characterized by a much faster, diffusivity-independent exponential decay. In this sense, diffusion acts as a singular perturbation in the long-time limit: although it does not control the decay rate itself, its presence is essential for enabling the enhanced asymptotic mixing observed here. Furthermore, increasing diffusivity improves mixing efficiency by reducing the prefactor of the exponential decay, thereby accelerating the approach to large-scale homogenization at finite times, even though the asymptotic decay rate remains unchanged.

Guided by these numerical observations, we derived new conditional lower bounds on the decay rate of the $H^{-1}$ norm that are strictly independent of the diffusivity for all $\kappa>0$. {We also established conditional upper bounds on the maximal rate of enhanced dissipation under the assumption that the $L^2$ norm decays at least exponentially in time.} 
These bounds rigorously capture the distinct asymptotic behavior of the diffusive system and confirm that the presence of diffusion, no matter how weak, enforces a fundamentally different long-time mixing regime. The decay rates measured in our computations are fully consistent with these theoretical bounds, providing strong validation of both the numerical and analytical approaches.

Beyond the shell model itself, the insights obtained in this study have broader implications. The clear separation between diffusive and non-diffusive asymptotic regimes, the identification of a Batchelor-scale-limited mixing state, and the diffusivity-independent decay bounds all point toward mechanisms that are expected to persist in more realistic PDE settings. 
As such, these results inform future studies of optimal mixing in PDE models, the development of sharp bounds on scalar transport, and the design of efficient stirring strategies in more complex fluid systems, including those governed by the Navier--Stokes equations. 
The shell model serves as a powerful and insightful proxy for the full PDE problem, offering a valuable stepping stone toward a deeper theoretical understanding of optimal mixing and scalar transport in fluid flows.

\section*{Acknowledgments}
The present work began when the first two authors were affiliated with the University of Michigan---the first as a doctoral student and the second as a postdoctoral researcher---under the supervision of Prof.~Charles R.\ Doering.  We dedicate this work to the memory of Charlie, who passed away in 2021 while this project was underway.  Charlie was an exceptional scientist and a deeply influential mentor.  His insight, rigor, and generosity shaped both the direction and the standards of this work, and his guidance continues to inspire us.  We are profoundly grateful for his mentorship and his enduring impact on
our work and on the field.

This work was supported by US National Science Foundation awards DMS-1515161, DMS-1813003, and DMS-2532634, the Deutsche Forschungsgemeinschaft (DFG, German Research Foundation) under Germany's Excellence Strategy EXC 2044/2--390685587, Mathematics M\"unster: Dynamics--Geometry--Structure, and computational resources and services provided by Advanced Research Computing at the University of Michigan.

%
%
%
%
%
\bibliographystyle{unsrt}
\bibliography{Mixing}


%
%
%

\end{document}